\title{Hardness of clique approximation\\ for monotone circuits}
\author{Jarosław Błasiok}
\author{Linus Meierhöfer}
\affil{ETH Zürich}
\begin{document}

\maketitle
\begin{abstract}
    We consider a problem of approximating the size of the largest clique in a graph, with a monotone circuit. Concretely, we focus on distinguishing a random \ErdosRenyi graph $\mathcal{G}_{n,p}$, with $p=n^{-\frac{2}{\alpha-1}}$ chosen st. with high probability it does not even have an $\alpha$-clique, from a random clique on $\beta$ vertices (where $\alpha \leq \beta$). Using the approximation method of Razborov, Alon and Boppana showed in their influential work in 1987 that as long as $\sqrt{\alpha} \beta < n^{1-\delta}/\log n$, this problem requires a monotone circuit of size $n^{\Omega(\delta\sqrt{\alpha})}$, implying a lower bound of $2^{\tilde\Omega(n^{1/3})}$ for the exact version of the $\cliquep_k$ problem when $k\approx n^{2/3}$.  Recently Cavalar, Kumar, and Rossman improved their result by showing the tight lower bound $n^{\Omega(k)}$, in a limited range $k \leq n^{1/3}$, implying a comparable $2^{\tilde{\Omega}(n^{1/3})}$ lower bound after choosing the largest admissible~$k$.

    We combine the ideas of Cavalar, Kumar and Rossman with the recent breakthrough results on the sunflower conjecture by Alweiss, Lovett, Wu and Zhang to show that as long as $\alpha \beta < n^{1-\delta}/\log n$, any monotone circuit rejecting $\mathcal{G}_{n,p}$ graph while accepting a $\beta$-clique needs to have size at least $n^{\Omega(\delta^2 \alpha)}$; this implies a stronger $2^{\tilde{\Omega}(\sqrt{n})}$ lower bound for the unrestricted version of the problem.

    We complement this result with a construction of an explicit monotone circuit of size $O(n^{\delta^2 \alpha/2})$ which rejects $\mathcal{G}_{n,p}$, and accepts any graph containing $\beta$-clique whenever $\beta > n^{1-\delta}$. In particular, those two theorems give a precise characterization for the largest $\beta$-clique that can be distinguished from $\mathcal{G}_{n, 1/2}$:
    when $\beta > n / 2^{C \sqrt{\log n}}$, there is a polynomial size circuit that solves it, while for $\beta < n / 2^{\omega(\sqrt{\log n})}$ every circuit needs size~$n^{\omega(1)}$.
\end{abstract}

\section{Introduction}
Monotone circuits form a restricted computation model, where computation is performed by a directed acyclic graph, with input nodes (of degree $0$) labeled by the input variables and internal nodes (\emph{gates}) labeled $\land$ or $\lor$ (computing logical \emph{and}, or logical \emph{or} respectively, of input wires). It is not difficult to see that every monotone function on $n$ binary variables $f:\{0, 1\}^n \to \{0, 1\}$ (and only monotone functions), can be computed by such a circuit, and by a simple counting argument one can show that a random monotone function requires a monotone circuit of size $2^{\Omega(n)}$. 

Remarkably, in contrast with more general boolean circuits (where the negation gate $\lnot$ is also allowed), since the breakthrough work of Razborov~\cite{razborov1985lower}, there are known unconditional super-linear lower bounds for the size of monotone circuit computing explicit monotone functions. Concretely, Razborov showed that for any $k \leq \log(n)$, the function $\cliquep_k : \{0,1\}^{\binom{n}{2}} \to \{0,1\}$~--- interpreting its input as an adjecency matrix of a graph $G$, and outputting $1$ if and only if said graph contains a clique on $k$ vertices~--- requires a monotone circuit of size $n^{\Omega(k)}$. Putting $k = \log n$, this gives a quasipolynomial lower bound $2^{\Omega(\log^2(n))}$;~an analogous result for boolean circuits including negation would imply $P \not=NP$ and seems to be way out of reach for current techniques, almost 40 years later.

To prove his lower bound, Razborov used the sunflower lemma of~\Erdos and Rado~\cite{erdos60}, bringing it to the attention of the theoretical computer science community. In the highly influential follow-up work, Alon and Boppana~\cite{alon87} further utilized the approximation method introduced by Razborov. They showed a $n^{\Omega(\sqrt{k})}$ lower bound for the $\cliquep_k$ problem when $k \leq n^{2/3}/\log n$~--- implying a $2^{\tilde{\Omega}(n^{1/3})}$ lower bound at the optimal value of $k=n^{2/3} / \log n$~--- a result which has since become a landmark of circuit complexity. Interestingly, and less commonly known, in the same paper, they showed a stronger inapproximability result: any monotone circuit that rejects every graph without even an $\alpha$-clique and accepts every graph that has a $\beta$-clique (where $\alpha \leq \beta$) needs to have a size at least $n^{\Omega(\delta \sqrt{\alpha})}$ as long as $\sqrt{\alpha} \beta < n^{1-\delta}/\log n$. In fact, their result holds in the average case -- they show that it is hard to reject a random $(\alpha-1)$-partite graph while accepting a clique on $\beta$ random vertices.

Since then, several other techniques for showing monotone circuit lower bounds for specific problems have been introduced over almost four decades~(e.g. \cite{andreev1985method,andreev1987method,jukna1999combinatorics,garg2018monotone,goos2019adventures,cavalar2023}), but the result of Alon and Boppana stood as essentially the best-known lower bound for the clique problem.

Very recently, the breakthrough result of Alweiss, Lovett, Wu, and Zhang~\cite{alweiss20} made a major step towards resolving the sunflower conjecture~--- conjecture about the ``right'' quantitative dependency in the sunflower lemma --- and was followed in short succession by a sequence of further improvements \cite{bell2021note,Rao2020Coding,Tao2020sunflower, Hu2021sunflower, rao2023sunflowers}. Particularly noteworthy is the exposition by Rao \cite{rao2023sunflowers}, presenting not only the improved sunflower bounds but also the resolution of the Kahn-Kalai conjecture in just a few pages.

Following up on these results, Cavalar, Kumar, and Rossman~\cite{cavalar2022}, showed how to utilize the new sunflower bounds to prove the first $2^{\Omega(\sqrt{n})}$ lower bound for any explicit monotone function (improving the earlier $2^{\tilde{\Omega}(n^{1/3})}$ lower bound), specifically for the~Harnik-Raz function\cite{harnik00lower}. In the same paper, they also gave the first substantial improvement over the Alon-Boppana bound for clique: they showed that as long as $k \leq n^{1/3 - \delta}$ the $\cliquep_k$ problem requires $n^{\Omega(k)}$ circuit size~--- a tight lower bound, but in a further restricted range of parameters; choosing the optimal $k \approx n^{1/3}$, this leads to the same $2^{\tilde{\Omega}(n^{1/3})}$ lower bound for $\cliquep$. As it turns out, their result for the clique did not use the new sunflower bounds, and in fact, it was unclear how to combine those techniques.

\subsection{Our results}


We prove a lower bound on the size of monotone circuits solving the promise problem $\mathrm{GAP}_{\alpha, \beta}(n)$.

\begin{definition}
    For integers $n, \alpha, \beta\in \mathbb{N}$ with $ \alpha < \beta$, define the promise problem $\mathrm{GAP}_{\alpha, \beta}(n)$, where the two promise subsets $\mathcal{A}, \mathcal{B}$ are defined as
    $$\mathcal{A} := \{G \in \{0, 1\}^{\binom{n}{2}}\,|\, G \text{ does not contain an } \alpha \text{-clique}\},$$
    and
    $$\mathcal{B} := \{G \in \{0, 1\}^{\binom{n}{2}} \,|\, G \text{ contains an } \beta \text{-clique}\}.$$
\end{definition}


To prove a lower bound on $\mathrm{GAP}_{\alpha, \beta}(n)$, we introduce an associated strengthened distributional problem $\mathcal{D}(\mathcal{T}^-_\alpha, \mathcal{T}^+_\beta)$ to distinguish two graph distributions $\mathcal{T}^-_\alpha$ and $\mathcal{T}^+_\beta$.

\begin{definition}
    For an integer $n, \alpha, \beta$ with $\alpha \leq \beta$, let $\mathcal{T}^+_\beta$ be a uniformly random distribution of isolated $\beta$-cliques on $n$ vertices. Let $\mathcal{T}^-_{\alpha} \sim \mathcal{G}_{n, p}$ the \ErdosRenyi random graph distribution on $n$ vertices with probability parameter $p=n^{-2/(\alpha - 1)}$. 

    The problem $\mathcal{D}(\mathcal{T}^-_\alpha, \mathcal{T}^+_\beta)$ is to distinguish those two distributions~--- i.e. given on input a graph $G$ drawn either from $\mathcal{T}^-_{\alpha}$ or $\mathcal{T}^+_{\beta}$ we want to reject with probability at least $2/3$ in the former case and accept with probability at least $2/3$ in the latter.
\end{definition}
Note that $\alpha$ is chosen so that $\mathcal{D}(\mathcal{T}^-_\alpha, \mathcal{T}^+_\beta)$ is associated with $\mathrm{GAP}_{\alpha, \beta}(n)$ as the probability that a graph drawn from $\mathcal{T}^-_\alpha$ contains a clique of size $\alpha$ is small~--- a simple union bound can be used to argue that $\Pr_{G\in \mathcal{T}^-_\alpha}[G \text{ contains } \alpha \text{-clique}] < 1/4$ for $\alpha \geq 4$. As such, any circuit solving the problem $\mathrm{GAP}_{\alpha, \beta}$ can distinguish $\mathcal{T}^-_{\alpha}$ from $\mathcal{T}^+_{\beta}$, and conversely, a lower bound for $\mathcal{D}(\mathcal{T}^-_{\alpha}, \mathcal{T}^+_{\beta})$ implies a lower bound for $\mathrm{GAP}_{\alpha, \beta}$. 

Our main results are lower and upper bounds on the monotone complexity of $\mathcal{D}(\mathcal{T}^-_\alpha, \mathcal{T}^+_\beta)$, captured in the following two theorems.


\begin{theorem}
\label{thm:main-lb}
    Let $\alpha\beta < n^{1-\delta}/\log(n)$. Then there exists no monotone circuit solving $\mathcal{D}(\mathcal{T}^-_\alpha, \mathcal{T}^+_\beta)$ with size less than $n^{\Omega(\delta^2 \alpha)}$.
\end{theorem}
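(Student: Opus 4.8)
The plan is to follow the \emph{approximation method} of Razborov, in the form refined by Cavalar--Kumar--Rossman, but with the sunflower-plucking step powered by the Alweiss--Lovett--Wu--Zhang bound rather than the classical \Erdos--Rado lemma. Fix a hypothetical monotone circuit $C$ of size $s = n^{o(\delta^2\alpha)}$ allegedly solving $\mathcal{D}(\mathcal{T}^-_\alpha,\mathcal{T}^+_\beta)$. To each gate we associate an \emph{approximator}: a monotone function represented by a family of ``cliques indicators'' on vertex sets of size at most $\ell$ (the \emph{legitimate} $\ell$-DNFs / $\ell$-CNFs), where $\ell$ is a threshold parameter to be tuned roughly as $\ell \approx \delta\alpha$ or a small multiple thereof. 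The two operations that must be approximated are $\lor$ (union of clause sets, followed by \emph{sunflower plucking} to keep the number of clauses bounded) and $\land$ (which after distributing produces clauses of size up to $2\ell$, which must then be \emph{trimmed} back to size $\ell$). I would set up the bookkeeping so that the number of clauses in every approximator stays below some $M = M(\ell, n)$, using the improved sunflower lemma: any family of more than $(\log r)^{O(r)} \cdot p$ sets of size $\le r$ contains a $p$-sunflower, so after plucking we retain at most roughly $(\log\ell)^{O(\ell)}$ core-indexing sets — this is the quantitative gain over \Erdos--Rado's $\ell!$ that ultimately turns the exponent $\sqrt\alpha$ into $\alpha$.

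The heart of the argument is the error analysis, carried out against the two test distributions. First, I would show that the final approximator $\tilde C$ is \emph{not} a good distinguisher: either it accepts too much of $\mathcal{T}^-_\alpha$ or rejects too much of $\mathcal{T}^+_\beta$. For the positive side, a legitimate $\ell$-clause is satisfied by a random $\beta$-clique from $\mathcal{T}^+_\beta$ with probability at least $(\beta/n)^\ell$ — so if $\tilde C$ rejects many $\beta$-cliques, it must be because the \emph{approximation errors} accumulated over the gates already reject a $\beta$-clique; and each single $\lor$- or $\land$-approximation step introduces a one-sided error on $\mathcal{T}^+_\beta$ of probability at most (number of discarded clauses) $\times$ (something like $(\beta/n)^{\ell}$ or $(\beta^2/n)$-type terms), which I bound using $\alpha\beta < n^{1-\delta}/\log n$. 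For the negative side, the key lemma is that a legitimate $\ell$-clause, i.e. the indicator that some $\ell$-subset forms a clique, is satisfied by $\mathcal{G}_{n,p}$ with $p = n^{-2/(\alpha-1)}$ only with probability $p^{\binom{\ell}{2}} = n^{-\ell(\ell-1)/(\alpha-1)}$; union-bounding over the at-most-$M$ clauses and over at-most-$s$ gates, the total error on $\mathcal{T}^-_\alpha$ is at most $s \cdot M \cdot n^{-\ell(\ell-1)/(\alpha-1)}$, which stays $o(1)$ precisely when $\ell$ is a large enough constant multiple of $\sqrt{\alpha\log_n(sM)}$ — and here the sunflower improvement, making $\log_n M = O(\ell\log\log\ell / \log n)$ rather than $O(\ell\log\ell/\log n)$, is what lets us take $\ell = \Theta(\delta\alpha)$ instead of $\Theta(\delta\sqrt\alpha)$. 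Combining: $\tilde C$ distinguishes the two distributions with advantage $\ge 2/3 - o(1)$, yet $\tilde C$ is a constant (a single $\ell$-clause family or the all-$0$/all-$1$ function after the errors are accounted), contradiction — unless $s \ge n^{\Omega(\delta^2\alpha)}$.

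The step I expect to be the main obstacle is making the \emph{trimming} operation after $\land$ interact correctly with the sunflower plucking, while keeping \emph{both} error terms (on $\mathcal{T}^+$ and on $\mathcal{T}^-$) under control \emph{simultaneously}. Trimming an $\ell$-clause family to size-$\le \ell'$ clauses necessarily loses some accepting inputs of $\mathcal{T}^+$, and the temptation is to trim aggressively to keep the negative error small; but then the positive error blows up. The delicate balance — present already in Alon--Boppana but sharper here because we are pushing $\ell$ all the way to $\Theta(\delta\alpha)$ — is choosing $\ell$, the sunflower parameter $p_{\mathrm{sun}}$, and the trimming threshold so that $s \cdot M \cdot n^{-\ell^2/\alpha} = o(1)$ \emph{and} $s \cdot (\text{per-step positive error}) = o(1)$ hold at the same time, and this is exactly where the hypotheses $\alpha\beta < n^{1-\delta}/\log n$ and $s < n^{c\delta^2\alpha}$ are both consumed. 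A secondary technical point is that $\mathcal{T}^-_\alpha = \mathcal{G}_{n,p}$ is a genuinely random graph rather than a random $(\alpha-1)$-partite graph as in Alon--Boppana, so the ``an $\ell$-clause survives with probability $p^{\binom{\ell}{2}}$'' computation must be done directly and a small union bound (as sketched in the excerpt's commented-out proof) is needed to confirm $\mathcal{T}^-_\alpha$ indeed typically has no $\alpha$-clique, so that the distributional lower bound transfers to $\mathrm{GAP}_{\alpha,\beta}$.
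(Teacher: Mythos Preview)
Your overall architecture---approximation by disjunctions of clique indicators, sunflower-plucking to bound the number of clauses, trimming after $\land$, and two-sided error analysis against $\mathcal{T}^+_\beta$ and $\mathcal{T}^-_\alpha$---matches the paper's. But there is a genuine gap at the quantitative heart of the argument, and it is precisely the step you flag as ``the main obstacle'': you cannot push $\ell$ up to $\Theta(\delta\alpha)$ using the ALWZ sunflower lemma alone.

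The issue is this. When you pluck a $k$-sunflower among the vertex sets $A_i$ and replace it by its core $C$, the error you introduce on $\mathcal{T}^-_\alpha=\mathcal{G}_{n,p}$ is the probability that $K_C\subset G$ yet no $K_{A_{i_j}}\subset G$; by independence of the petals this is at most $(1-p^{\binom{\ell}{2}})^k$. To drive this below $\varepsilon=n^{-\Theta(c)}$ you need $k\gtrsim p^{-\binom{\ell}{2}}\log(1/\varepsilon)$. Feeding this $k$ into the ALWZ bound $S(\ell,k)\le O(k\log\ell)^\ell$ gives
\[
M^{1/\ell}\;\lesssim\; p^{-\binom{\ell}{2}}\log(1/\varepsilon)\log\ell\;\approx\; n^{\ell^2/\alpha}\cdot c\log n,
\]
so the positive-error condition $(\beta/n)\,M^{1/\ell}<1$ forces $\ell^2/\alpha\lesssim\delta$, i.e.\ $\ell\lesssim\sqrt{\delta\alpha}$. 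That recovers only the Alon--Boppana exponent $n^{\Omega(\sqrt{\alpha})}$, not $n^{\Omega(\alpha)}$. Your bookkeeping claim that ``after plucking we retain at most roughly $(\log\ell)^{O(\ell)}$ core-indexing sets'' silently assumes $k=O(1)$, which is far too few petals to control the negative error.

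What the paper actually does is bypass combinatorial sunflowers entirely and work with \emph{robust clique sunflowers} (families whose core can be swapped in with $\varepsilon$-error on $\mathcal{G}_{n,p}$ by definition). The new ingredient---absent from your sketch---is a comparison theorem showing that every $(p^\ell,\varepsilon/\ell^2)$-robust sunflower on the vertex sets is already a $(p,\varepsilon)$-robust clique sunflower; this is proved by a Slepian-type inequality for certain Bernoulli processes. Combined with the ALWZ robust-sunflower bound it yields $M^{1/\ell}\lesssim p^{-\ell}\log(1/\varepsilon)\approx n^{2\ell/\alpha}\cdot c\log n$, a full factor of $\ell$ better in the exponent, and this is exactly what lets one take $\ell=\Theta(\delta\alpha)$. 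Your negative-error expression $s\cdot M\cdot n^{-\ell(\ell-1)/(\alpha-1)}$ is also not the right object: the per-gate negative error is governed by the sunflower-replacement parameter $\varepsilon$ (times at most $n^{2c}$ replacements), not by a union bound over clause satisfactions.
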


\begin{theorem}
\label{thm:main-ub}
    Let  $\beta \geq n^{1-\delta}$. Then there exists a monotone circuit $C$ on $\binom{n}{2}$ inputs, solving $\mathcal{D}(\mathcal{T}^-_\alpha, \mathcal{T}^+_\beta)$ with $size(C) = \mathcal{O}(n^{\delta^2\alpha/2})$.
\end{theorem}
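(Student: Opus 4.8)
The plan is to construct an explicit monotone circuit that looks for a moderately large clique — not a $\beta$-clique directly, but a clique of some intermediate size $t$, chosen small enough that the circuit stays small, yet large enough that a random graph $\mathcal{G}_{n,p}$ almost surely has no $t$-clique while every graph containing a $\beta$-clique trivially contains a $t$-clique. Concretely, the circuit is the obvious monotone formula $\bigvee_{S \in \binom{[n]}{t}} \bigwedge_{\{u,v\}\in\binom{S}{2}} x_{uv}$, whose size is $\binom{n}{t}\binom{t}{2} = n^{O(t)}$. The entire task then reduces to choosing $t$ so that (i) $n^{O(t)} = O(n^{\delta^2\alpha/2})$, forcing $t$ to be roughly $\delta^2\alpha/2$ up to the hidden constant, and (ii) the threshold for a $t$-clique in $\mathcal{G}_{n,p}$ with $p = n^{-2/(\alpha-1)}$ lies above $t$, i.e. the expected number of $t$-cliques $\binom{n}{t}p^{\binom{t}{2}}$ is $o(1)$.

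First I would pin down the clique-number calculation: $\binom{n}{t}p^{\binom t2} \le n^{t} \cdot n^{-\frac{2}{\alpha-1}\cdot\frac{t(t-1)}{2}} = n^{t(1 - \frac{t-1}{\alpha-1})}$, which is $o(1)$ precisely when $t - 1 > \alpha - 1$, i.e. $t \ge \alpha+1$ — wait, that is too large. So instead I would be more careful and only demand $t(1-\frac{t-1}{\alpha-1}) < -1$, i.e. $t \geq \alpha - 1 + \frac{\alpha-1}{t}$, which for $t$ a small multiple of $\alpha$ is not quite enough either; the honest route is to use the hypothesis $\beta \ge n^{1-\delta}$ together with $\alpha\beta$ bounded, which forces $\alpha < n^{\delta}/\log n$, and then observe that even $t$ slightly larger than $\alpha$ suffices while still being $\le \delta^2\alpha/2$ only if $\delta^2/2 > 1$ — so this naive approach does not obviously give the claimed exponent. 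The correct construction must therefore be cleverer: rather than testing for a plain $t$-clique, I would test for the existence of a large clique indirectly by covering the vertex set with blocks and checking for cliques inside blocks, or by exploiting that a $\beta$-clique with $\beta \ge n^{1-\delta}$ occupies a $(1-\delta)$-fraction of the vertices so that a random small subset of vertices hits it densely. The key idea I expect the paper uses: partition-based or subsampling-based detection that amplifies the effective clique size relative to the circuit size, giving the quadratic savings $\delta^2\alpha/2$ rather than $\delta\alpha$.

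Concretely, I would proceed as follows. Fix $t \approx \delta\alpha$ (so the clique number of $\mathcal{G}_{n,p}$ is comfortably below $t$, since the clique number is $(1+o(1))\alpha$ and $\delta < 1$ forces... no — one needs $t$ above the clique number, so take $t = \lceil \alpha(1+\epsilon)\rceil$ for small $\epsilon$). The real leverage is the accepting side: when $\beta \ge n^{1-\delta}$, restrict attention to an arbitrary fixed set $W$ of $m = n^{1-\delta}$ vertices; a uniformly random isolated $\beta$-clique intersects $W$ in about $m\beta/n \ge m\cdot n^{-\delta} = n^{1-2\delta}$ vertices with high probability by a concentration bound, but more usefully, by averaging there is a subfamily of $O(n/\beta) \le O(n^{\delta})$ blocks of size $t$ such that any $\beta$-clique contains one of them entirely — this is the step that must be made precise and is the crux. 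Then the circuit becomes $\bigvee$ over this structured, much smaller family of blocks, of size $n^{\delta}\cdot\binom{n}{t}/(\text{savings})$, and optimizing the block structure against $\alpha\beta < n^{1-\delta}$ yields the exponent $\delta^2\alpha/2$. I expect the main obstacle to be exactly this combinatorial covering argument: showing that a small, explicitly describable collection of $t$-subsets suffices to "catch" every $\beta$-clique (so that the circuit remains monotone and explicit) while simultaneously every graph from $\mathcal{T}^-_\alpha$ avoids all of them — balancing these two constraints is what forces the precise shape $\beta \ge n^{1-\delta}$ and produces the $\delta^2\alpha/2$ bound, and getting the soundness (rejection of $\mathcal{G}_{n,p}$ with probability $\ge 2/3$) right simultaneously with completeness will require a careful union bound over the structured family rather than all of $\binom{[n]}{t}$.
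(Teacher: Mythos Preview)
Your proposal has a genuine gap: you are committed to an \emph{OR} of clique indicators and to worst-case reasoning (the circuit must accept every graph with a $\beta$-clique and reject almost every $\mathcal{G}_{n,p}$ graph outright). That forces the indicator size $t$ above the clique number of $\mathcal{G}_{n,p}$, i.e.\ $t > \alpha$, and then the circuit size is at least $n^{\Omega(\alpha)}$ no matter how cleverly you thin the family of $t$-subsets --- the covering/partition ideas you sketch cannot beat this, because as long as \emph{some} indicator in your OR has size $\le \alpha$, a $\mathcal{G}_{n,p}$ graph will trigger it with non-negligible probability, and if all indicators have size $> \alpha$ you need $\Omega\bigl(\binom{n}{\alpha}/\binom{\beta}{\alpha}\bigr) \ge (n/\beta)^{\alpha} \ge n^{\delta\alpha}$ of them just to cover all $\beta$-cliques. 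Your partition sketch does not evade this tradeoff.

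The paper's construction is different in kind: it uses a \emph{threshold} gate rather than an OR, and clique indicators of size $l \approx \delta\alpha/2$, which is \emph{below} the clique number of $\mathcal{G}_{n,p}$. The point is not that $\mathcal{G}_{n,p}$ contains no $l$-clique, but that a uniformly random $l$-subset is a clique with probability only $p^{\binom{l}{2}} = n^{-l(l-1)/(\alpha-1)}$, whereas under $\mathcal{T}^+_\beta$ a random $l$-subset lands inside the planted clique with probability $(\beta/n)^l \ge n^{-\delta l}$. Choosing $l$ so that $n^{-\delta l} \ge 5\, n^{-l(l-1)/(\alpha-1)}$, i.e.\ $l \approx \delta(\alpha-1)/2$, makes the second probability a constant multiple of the first; then $m = O(n^{\delta l}) = O(n^{\delta^2\alpha/2})$ random $l$-indicators fed into a threshold-at-$\Theta(pm)$ gate separate the two distributions by Markov on the negative side and Chebyshev (the indicators are conditionally independent given the clique location) on the positive side. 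The quadratic $\delta^2$ in the exponent is exactly the gain from $\binom{l}{2}$ being quadratic in $l$ while the positive-side exponent $\delta l$ is linear --- this is the mechanism you were looking for, and it does not come from any covering argument.
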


We would like to emphasize a special case of $\alpha = \log n$ (or, equivalently, the negative distribution $\mathcal{T}^-$ being an \ErdosRenyi graph $\mathcal{G}_{n,p}$ with $p=1/2$). In this case, according to \Cref{thm:main-ub}, we can choose $\delta = \sqrt{1/\log n}$ to get a polynomial-size circuit, rejecting $\mathcal{G}_{n,p}$ yet accepting a clique of size $\beta = n / 2^{C \sqrt{\log n}}$. On the other hand, \Cref{thm:main-lb} states that if we wanted to reject $\mathcal{G}_{n, 1/2}$ and accept a clique of size $\beta$, where $\beta = n / 2^{\omega(\sqrt{\log n})}$, we would need a circuit of size $n^{\omega(1)}$. 

As such, our two theorems provide a near tight characterization of a power of polynomial-size monotone circuits to distinguish the $\mathcal{G}_{n, 1/2}$ graph from a large clique. 

Another interesting corollary of~\Cref{thm:main-lb}, is obtained by setting $\beta = \alpha$, and taking $\alpha$ as large as possible while satisfying the restriction $\alpha^2 < n^{1-\delta}/\log n$, in order to obtain a strongest possible lower bound for the clique problem.

\begin{corollary}
    For any $k \leq n^{1/2 - \delta} / \log^2 n$, the monotone complexity of the $\cliquep_k$ is $n^{\Omega(\delta^2 k)}$. In particular monotone complexity of the $\cliquep$ problem is $2^{\tilde{\Omega}(\sqrt{n})}$.
\end{corollary}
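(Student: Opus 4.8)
\section*{Proof proposal for the Corollary}

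The plan is to derive the corollary from \Cref{thm:main-lb} by instantiating it with $\alpha=\beta=k$. First I would verify the hypothesis of the theorem. With $\alpha=\beta=k$ the constraint $\alpha\beta<n^{1-\delta}/\log n$ reads $k^2<n^{1-\delta}/\log n$; since we assume $k\le n^{1/2-\delta}/\log^2 n$ we get $k^2\le n^{1-2\delta}/\log^4 n$, and the ratio of this to $n^{1-\delta}/\log n$ is $n^{-\delta}/\log^3 n<1$ for every $n\ge 2$, so the hypothesis holds with room to spare. Hence \Cref{thm:main-lb} applies and tells us that every monotone circuit solving $\mathcal{D}(\mathcal{T}^-_k,\mathcal{T}^+_k)$ has size at least $n^{\Omega(\delta^2 k)}$.

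Second, I would observe that a monotone circuit computing $\cliquep_k$ \emph{exactly} already solves the distributional problem $\mathcal{D}(\mathcal{T}^-_k,\mathcal{T}^+_k)$: a sample from $\mathcal{T}^+_k$ contains a $k$-clique by construction, so $\cliquep_k$ outputs $1$ on it with probability $1$; and a sample from $\mathcal{T}^-_k=\mathcal{G}_{n,p}$ with $p=n^{-2/(k-1)}$ contains no $k$-clique except with probability less than $1/4$ (the union bound already recorded in the text, valid for $k\ge 4$), so $\cliquep_k$ outputs $0$ on it with probability greater than $3/4$. Both probabilities exceed $2/3$, so the same circuit solves $\mathcal{D}(\mathcal{T}^-_k,\mathcal{T}^+_k)$ and therefore has size $n^{\Omega(\delta^2 k)}$ by the previous paragraph. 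This proves the first claim; the statement for the general $\cliquep$ problem then follows since detecting a clique of one fixed size is a restriction of it.

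For the ``in particular'' part I would choose parameters to maximize the exponent: take $\delta=1/\log n$, so that $n^{-\delta}=\Theta(1)$ and hence $k=\lfloor n^{1/2-\delta}/\log^2 n\rfloor=\Theta(\sqrt n/\log^2 n)$. Then $\delta^2 k=\Theta(\sqrt n/\log^4 n)$, and the lower bound becomes $n^{\Omega(\delta^2 k)}=2^{\Omega(\delta^2 k\log n)}=2^{\Omega(\sqrt n/\log^3 n)}=2^{\tilde\Omega(\sqrt n)}$.

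I do not expect a genuine obstacle here, as the corollary is a routine specialization of \Cref{thm:main-lb}. The only points that need a moment's care are: (i) checking that the chosen $k$ really satisfies the theorem's hypothesis, which it does with polynomial slack as computed above; (ii) making the trivial reduction from $\cliquep_k$ to $\mathcal{D}(\mathcal{T}^-_k,\mathcal{T}^+_k)$ precise, which only requires that $\mathcal{G}_{n,p}$ with $p=n^{-2/(k-1)}$ is $k$-clique-free with probability bounded away from $1/3$ (the stated union bound gives this for $k\ge 4$); and (iii) choosing $\delta$ to decay slowly enough, namely $\delta=\Theta(1/\log n)$, so that $n^{1/2-\delta}=n^{1/2-o(1)}$ while $\delta^2 k\log n$ stays at $\sqrt n/\mathrm{polylog}(n)$, ensuring the ``$\tilde\Omega$'' hides only a polylogarithmic factor rather than a larger loss.
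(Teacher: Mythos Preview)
Your proposal is correct and matches the paper's approach exactly: the paper's entire proof is the one sentence preceding the corollary (``set $\beta=\alpha$ and take $\alpha$ as large as possible subject to $\alpha^2<n^{1-\delta}/\log n$''), and you have filled in precisely the verifications that sentence leaves implicit---the hypothesis check, the reduction from $\cliquep_k$ to $\mathcal{D}(\mathcal{T}^-_k,\mathcal{T}^+_k)$ via the union bound already stated in the text, and the parameter choice $\delta=\Theta(1/\log n)$ for the ``in particular'' clause.
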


Note that no truly exponential lower bounds for the size of the monotone circuit are known for \emph{any} explicit monotone problem. A lower bound of $2^{\tilde\Omega(\sqrt{n})}$  (where $n$ is size of the input) was shown for the arguably less-natural Harnik-Raz function \cite{harnik00lower} in \cite{cavalar2022}, breaking the long-standing barrier of $2^{\tilde{\Omega}(n^{1/3})}$ --- and using the same breakthroughs in sunflower lemmas that are crucial in our improvement.

In our case, the input is an adjecency matrix of a graph on $n$ vertices, hence the input has $m = \Theta(n^2)$ bits~--- our lower bound, in terms of the input size, is then $2^{\Omega(m^{1/4})}$~--- so the exponent is still by a factor of two worse than in the strongest known lower bound for any explicit monotone function.
\subsection{Approximation method \label{sec:intro-approx-method}}
Let us briefly recall the idea behind the approximation method for showing monotone circuit lower bounds, introduced by Razborov~\cite{razborov1985lower}, and then utilized by~\cite{alon87,cavalar2022}.

In order to show a lower bound for a size of a monotone circuit distinguishing two specific distributions $\mathcal{T}^+$ and $\mathcal{T}^-$, we proceed by inductively (gate-by-gate) approximating any given circuit $C$, by a simpler circuit $\hat{C}$ (from some class of simple circuits).

If we can show that
\begin{enumerate}
    \item Every simple circuit fails to distinguish between $\mathcal{T}^+$ and $\mathcal{T}^-$~--- that is, for all simple $\hat{C}$, we have $\E_{x \sim \mathcal{T}^+}\hat{C}(x) - \E_{x\sim \mathcal{T}^-}\hat{C}(x) \leq o(1)$.
    \item In each gate by applying the approximation we introduce error at most $\delta$ on both positive and negative distribution.
\end{enumerate}
That implies a lower bound $\Omega(\delta^{-1})$ for the size of the smallest monotone circuit $C$ distinguishing those two distributions.

For the clique problem (say, distinguishing random $\mathcal{G}_{n,p}$ graph which likely does not even have a clique of size $\alpha$, from a uniformly random clique of size $\beta \geq \alpha$), following \cite{razborov1985lower,alon87,cavalar2022}, the ``simple'' circuits we use to approximate a given circuit are just small DNF formulas, specifically conjunctions of clique indicators~--- that is circuits of form
\begin{equation*}
    \bigvee_i \mathcal{K}_{A_i},
\end{equation*}
where
\begin{equation*}
    \mathcal{K}_{A_i} := \bigwedge_{ \{u, v\} \in A_i} x_{uv}.
\end{equation*}
Moreover, a circuit is simple if the size of each clique indicator is bounded by $c$ (we will eventually chose $c := \delta \alpha$), and the number of clique indicators of each size $\ell \leq c$ is appropriately bounded.

At a given gate, we wish to approximate a conjunction or a disjunction of two such simple circuits again by a simple circuit. By applying de Morgan laws in the conjunction case, we can transform the circuit again into DNF (without introducing any error).

Then we proceed by repeating the following three steps, transforming the DNF obtained this way into a simple one.
\begin{enumerate}
    \item We replace all conjunctions $\mathcal{K}_{A_i} \land \mathcal{K}_{A_j}$ by clique indicators $\mathcal{K}_{A_i \cup A_j}$.
    \item As long as there is a family of cliques on sets $A_{i_1}, \ldots A_{i_k}$ having a specific combinatorial structure, we replace all of them by a clique on the set $C$, where $C$ is the intersection of $A_{i_j}$.
    \item We remove all clique indicators $\mathcal{K}_{A_i}$ for $A_i$ larger than threshold $c$.
\end{enumerate}

As it turns out, both on the positive and negative distributions, the step 1 does not introduce any error.

Step 3 can introduce error only on the positive distribution, and this error can be bounded by union bound --- any given large clique indicator is unlikely to be satisfied by a large random clique, and the number of those large clique indicators we discard is bounded (otherwise, we would have been able to apply step 2).

Finally, step 2 is the crux of the argument --- we need to show that if the number of indicators of a given size is too large, then there always is a subset of those indicators that can be replaced by its intersection $C$ (the \emph{core}), without introducing too much error on the negative distribution (clearly we will not introduce any error on the positive distribution in this case). 

In the Razborovs proof, and some presentations of the Alon-Boppana proof (see for example~\cite[Chapter 9]{jukna2012boolean}), one can focus on finding a sunflower consisting of many sets among the set system $\{A_i\}$ (a sunflower is a family of sets for which all pairwise intersections are the same, see \Cref{sec:intro-sunflowers}), and show that replacing a large sunflower by its core introduces only small error on the negative distribution. This, together with the \ErdosRado result that any large enough family of bounded sets contains a large sunflower, gives an upper bound on the number of clique indicators of size $\ell$ for a ``simple circuit'' (one on which the step 2 can no longer be applied), and hence yields a complete proof of a monotone circuit lower bound for the clique problem. Specifically, going carefully through the calculations, one could show this way a lower bound $n^{\Omega(\sqrt{k})}$ for the $\cliquep_k$ problem whenever $k \leq n^{1/2 - \delta}$, translating to a $2^{\tilde{\Omega}(n^{1/4})}$ lower bound for the $\cliquep$ problem after taking optimal $k$.
\footnote{In the actual Alon-Boppana paper, they used a slightly more general combinatorial structure: a sequence of distinct sets $A_{i_1}, \ldots A_{i_k}$, together with a set $C \subset A_t$ (not necessarily distinct from $A_{i_1}, \ldots A_{i_k}$), s.t. all pairwise intersections $A_{i_s} \cap A_{i_r} \subset C$~--- and they replaced $A_{i_1}, \ldots A_{i_k}$ by $C$ in step 2 here. Clearly any sunflower $A_{i_1} \ldots A_{i_k}$ with the core $C = \bigcap A_{i_j}$ satifies this property. By using this more general structure, they were able to show a lower bound $n^{\sqrt{k}}$ for $k\leq n^{2/3}$~--- matching the result one would get insisting on using a sunflower here if the sunflower conjecture was true; yet without having to prove this conjecture. }

The work of Cavalar, Kumar and Rossman \cite{cavalar2022}, introduced a notion of \emph{robust clique-sunflower}, which abstracts exactly the property needed to bound the error introduced on the negative distribution $\mathcal{G}_{n,p}$  by replacing the robust clique sunflower by its core. They showed better quantitative bounds on the size of the set system needed to contain such a robust clique sunflower, and were able to deduce a lower bound $n^{\tilde{\Omega}(k)}$ for $k \leq n^{1/3 - \delta}$ --- matching the same $2^{\tilde\Omega(n^{1/3})}$ for the clique problem when picking largest admissible $k$.

\subsection{Sunflowers, robust sunflowers and robust clique sunflowers\label{sec:intro-sunflowers}}

We define the robust clique sunflower (after~\cite{cavalar2022})~--- a notion abstracting the main property of combinatorial sunflowers used to obtain a monotone lower bound for the $\cliquep_k$ problem, where the negative distribution is $\mathcal{G}_{n,p}$. This property ensures that for a DNF formula 
\begin{equation*}
    \bigvee_{i} \mathcal{K}_{A_i}
\end{equation*}
if a number of those sets $A_i$ form a robust clique sunflower, we can replace all of the sets from the sunflower by a single clique indicator on the common intersection $C$, while introducing only a small error on the negative distribution (and introducing no error on the positive distribution).

\begin{definition}[Robust clique sunflower]
\label{def:robust-clique-sunflower}
A family of sets $\mathcal{S} \subset 2^{[n]}$ is an $(p, \varepsilon)$-robust clique sunflower (with core $C = \bigcap_{S \in \mathcal{S}} S$), if 
\begin{equation*}
    \Pr_G(\exists S \in \mathcal{S}, K_S \subset G \cup K_C) \geq 1-\varepsilon,
\end{equation*} 
where $G$ is a random \ErdosRenyi graph $\mathcal{G}_{n,p}$ sampled by including each edge independently with probability $p$, and $K_S \subset \binom{[n]}{2}$ is a clique on vertices $S$, i.e. $K_S := \{ \{u, v\} : u, v \in S, u\not=v\}$.
\end{definition}
This notion is intimately related to a similar notion of a robust sunflower, originally introduced in~\cite{rossman2010monotone}.
\begin{definition}[Robust sunflower]
\label{def:robust-sunflower}
A family of sets $\mathcal{S} \subset 2^{[n]}$ is an $(p, \varepsilon)$-robust sunflower (with core $C = \bigcap_{S \in \mathcal{S}} S$), if 
\begin{equation*}
    \Pr_W(\exists S \in \mathcal{S}, S \subset W \cup C) \geq 1-\varepsilon,
\end{equation*} 
where $W$ is a random subset of $[n]$ chosen by including each element independently with probability $p$.
\end{definition}
As it turns out, any large enough family of sets of size $\ell$ contain a robust sunflower or a robust clique sunflower. We introduce a notation for the dependence between the size of the family the parameters of this sunflower.
\begin{definition}
\label{def:robust-sunflower-function}
    We define $\RB(\ell, p, \varepsilon)$ to be the smallest number such that any $\ell$-uniform set system of size at least $\RB(\ell, p, \varepsilon)$ contains a $(p, \varepsilon)$-robust sunflower.

    Similarly, we define $\RCB(\ell, p, \varepsilon)$ to be the smallest number such that any $\ell$-uniform set system of size at least $\RCB(\ell, p, \varepsilon)$ contains a $(p, \varepsilon)$-robust clique-sunflower.
\end{definition}

Note that a family $\mathcal{S} \subset 2^{[n]}$ is a robust clique sunflower (as in the \Cref{def:robust-clique-sunflower}), if and only if the family of edge-sets $\{ K_S : S \in \mathcal{S}\}$ is a robust sunflower.
This implies a simple (yet unsatisfactory) bound
\begin{equation}
\label{eq:simple-rcb-rb-comparison}
\RCB(\ell, p, \varepsilon) \leq \RB\left(\binom{\ell}{2}, p, \varepsilon\right).
\end{equation}

In~\Cref{sec:intro-clique-sunflower-bounds} (for instance~\Cref{thm:clique-bound-rcb}) we provide a quantitative statement for how the upper bounds on $\RCB$ can be translated into lower bounds for the $\cliquep$ problem. 

For completeness, let us discuss a way to reinterpret a result similar to the Alon-Boppana (with the negative distribution being $\mathcal{G}_{n,p}$ instead of random $(\alpha-1)$-partite graph) in this framework, by deducing a bound on the robust clique sunflowers from the sunflower bound.

\begin{definition}[$k$-Sunflower]
    An family $\mathcal{S} \subset 2^{[n]}$ of $k$ sets is a $k$-sunflower (with a core $C = \cap_{S \in \mathcal{S}} S$), if all pairwise intersections of sets from $\mathcal{S}$ are $C$, i.e. for all $S_1 \not= S_2 \in \mathcal{S}$, we have $S_1 \cap S_2 = C$.
\end{definition}

What we call a $k$-sunflower is sometimes called in the literature \emph{a sunflower with $k$-petals}. We chose a slightly more concise terminology.

A sunflower lemma~\cite{erdos60} now says that for every $\ell$ and $k$ there is a finite number $S(\ell, k) \leq \ell! (k-1)^\ell$ such that every $\ell$-uniform\footnote{$\ell$-uniform set system is just a family of subsets of a universe, each subset of size exactly~$\ell$.} set system of size at least $S(\ell, k)$ has a $k$-sunflower. Subsequent results, including the breakthrough by~\cite{alweiss20} and improvements~\cite{bell2021note,Rao2020Coding,Tao2020sunflower, Hu2021sunflower, rao2023sunflowers} provide successively smaller upper bounds on $S(\ell, k)$, concluding with the best currently known bound
\begin{equation}
\label{eq:new-sunflower-numbers}
    S(\ell, k) \leq O(k \log \ell)^{\ell}.
\end{equation}
while the famous sunflower conjecture (\cite{erdos60}) stipulates that $S(\ell, k) \leq O(k)^\ell$.

The following observation, similar in form to the core of the Razborov and Alon-Boppana results, is a simple way to connect sunflowers to robust clique sunflowers.
\begin{lemma}
\label{lem:sunflower-is-clique-sunflower}
    Every $\ell$-uniform $k$-sunflower is an $(p, \exp(-k p^{\binom{\ell}{2}}))$-robust clique sunflower. In particular
    $\RCB(\ell, p, \varepsilon) \leq S(\ell, \log(1/\varepsilon) p^{-\binom{\ell}{2}})$.
\end{lemma}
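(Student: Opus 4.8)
The plan is to verify the robust-clique-sunflower condition for a $k$-sunflower directly by a short independence argument, and then extract the bound on $\RCB$ from the definition of $S(\ell,k)$ mechanically.

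First I would fix an $\ell$-uniform $k$-sunflower $\mathcal{S} = \{S_1,\ldots,S_k\}$ with core $C = \bigcap_i S_i$, and for each $i$ let $E_i := K_{S_i}\setminus K_C$ be the set of edges of the clique $K_{S_i}$ having at least one endpoint in the petal $S_i\setminus C$. The one structural fact that does all the work is that the $E_i$ are pairwise disjoint: if some edge $\{u,v\}$ lay in both $E_i$ and $E_j$ with $i\neq j$, then without loss of generality $u\in S_i\setminus C$, while also $u\in S_j$; but then $u\in S_i\cap S_j = C$, a contradiction. Also $|E_i| = \binom{\ell}{2}-\binom{|C|}{2}\le\binom{\ell}{2}$, and since $C\subseteq S_i$ we have $K_{S_i} = E_i\cup K_C$, so the event $\{E_i\subseteq G\}$ already implies $K_{S_i}\subseteq G\cup K_C$.

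Next comes the probabilistic step. Because the $E_i$ are disjoint edge-sets, the events $\{E_i\subseteq G\}$ for $G\sim\mathcal{G}_{n,p}$ are mutually independent, each of probability $p^{|E_i|}\ge p^{\binom{\ell}{2}}$ (using $0<p\le1$). Hence
\[
\Pr_G\big(\forall i:\ K_{S_i}\not\subseteq G\cup K_C\big)\ \le\ \prod_{i=1}^{k}\big(1-p^{|E_i|}\big)\ \le\ \big(1-p^{\binom{\ell}{2}}\big)^{k}\ \le\ \exp\big(-k\,p^{\binom{\ell}{2}}\big),
\]
where the last step is $1-x\le e^{-x}$. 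This is exactly the statement that $\mathcal{S}$ is a $\big(p,\exp(-k\,p^{\binom{\ell}{2}})\big)$-robust clique sunflower, proving the first assertion.

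For the bound on $\RCB$, I would set $k := \lceil \log(1/\varepsilon)\,p^{-\binom{\ell}{2}}\rceil$, so that $\exp(-k\,p^{\binom{\ell}{2}})\le\varepsilon$. By the sunflower lemma, any $\ell$-uniform family of size at least $S(\ell,k)$ contains a $k$-sunflower, which by the previous paragraph is a $(p,\varepsilon)$-robust clique sunflower (enlarging the error parameter from $\exp(-k\,p^{\binom{\ell}{2}})$ up to $\varepsilon$ only weakens the requirement). Therefore $\RCB(\ell,p,\varepsilon)\le S(\ell,k) = S\big(\ell,\lceil\log(1/\varepsilon)\,p^{-\binom{\ell}{2}}\rceil\big)$, which is the claimed bound once we read $S(\ell,\cdot)$ as non-decreasing and extend it to real arguments, making the ceiling harmless. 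I do not expect a genuine obstacle here: the entire content is the disjointness of the petal edge-sets $E_i$ (which is what makes $K_C$ the right thing to ``add back'' and lets us ignore the $\binom{|C|}{2}$ core edges); after that it is a one-line product bound. The only care needed is bookkeeping — correctly identifying $E_i$ as the set of ``new'' edges and handling the integrality of $k$ in the last step.
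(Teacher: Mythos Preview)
Your argument is correct and is essentially the same as the paper's: both note that the edge-sets $K_{S_i}\setminus K_C$ are pairwise disjoint, so the events $\{K_{S_i}\subseteq G\cup K_C\}$ are independent with probability at least $p^{\binom{\ell}{2}}$, and the bound $(1-p^{\binom{\ell}{2}})^k\le\exp(-kp^{\binom{\ell}{2}})$ finishes it. Your write-up is just a bit more explicit (you spell out why the $E_i$ are disjoint and handle the integrality of $k$ for the $\RCB$ bound), but there is no substantive difference in approach.
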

\begin{proof}
    Consider a $k$-sunflower $S_1, \ldots S_k$ with core $C$. By the definition of sunflower the sets of edges $K_{S_i} \setminus K_{C}$ are disjoint, hence the indicator random variables
    $R_i := \mathbf{1}[K_{S_i} \subset G \cup K_C]$ are independent. Since $\E R_i \geq p^{\binom{\ell}{2}}$, the probability that all $R_i$ are zero is at most $(1 - p^{\binom{\ell}{2}})^k \leq \exp(-k p^{\binom{\ell}{2}})$.
\end{proof}

This, combined only with the classical upper bound $S(\ell, k) \leq O(\ell k)^{\ell}$ by \ErdosRado yields an upper bound
\begin{equation*}
    \RCB(\ell, p, \varepsilon) \leq O(1/p)^{\ell^3} (\ell \log(1/\varepsilon))^\ell,
\end{equation*}
which can be used to recover $n^{\Omega(\sqrt{\alpha})}$ lower bound via the approximation method outlined in the previous section~(see for instance~\Cref{thm:clique-bound-rcb}).

Crucially for us, a bound on robust sunflower numbers is an essential part of the new, improved series of results on the bounds for the sunflower numbers. Specifically, the following theorem was used to deduce those new bounds, and is important on its own for our application.

\begin{theorem}[\cite{alweiss20,bell2021note,rao2023sunflowers}]
\label{thm:robust-sunflowers}
    The robust sunflower numbers are bounded as
    \begin{equation*}
        \RB(\ell, p, \varepsilon) \leq O(p^{-1} (\log(1/\varepsilon) + \log \ell) )^{\ell}.
    \end{equation*}
\end{theorem}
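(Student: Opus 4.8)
The plan is to deduce \Cref{thm:robust-sunflowers} from the \emph{spread lemma} of Alweiss--Lovett--Wu--Zhang, in the quantitatively sharp form due to Rao and to Bell--Suri--Warnke; the whole difficulty is concentrated in that lemma, and the theorem then follows by a short combinatorial reduction. Call an $\ell$-uniform family $\mathcal{F}$ \emph{$r$-spread} if $\#\{S\in\mathcal{F}:T\subseteq S\}\le r^{-|T|}|\mathcal{F}|$ for every set $T$. The spread lemma asserts that there is an absolute constant $C$ so that whenever $\mathcal{H}$ is a nonempty $m$-uniform $r$-spread family with $r\ge Cp^{-1}\bigl(\log(1/\varepsilon)+\log m\bigr)$, a random set $W$ including each ground-set element independently with probability $p$ satisfies $\Pr_W[\exists S\in\mathcal{H}:\ S\subseteq W]\ge 1-\varepsilon$.

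\emph{Step 1: reduction.} Put $r:=Cp^{-1}\bigl(\log(1/\varepsilon)+\log\ell\bigr)$ and let $\mathcal{F}$ be $\ell$-uniform with $|\mathcal{F}|>r^\ell$. Pick a set $K$ of \emph{maximum} cardinality with $\#\{S\in\mathcal{F}:K\subseteq S\}\ge r^{-|K|}|\mathcal{F}|$; this is well defined since $K=\emptyset$ qualifies, and $|K|<\ell$ because $|K|=\ell$ would force the count to be $\le 1 < r^{-\ell}|\mathcal{F}|$. Form the link $\mathcal{G}:=\{S\setminus K:\ S\in\mathcal{F},\ K\subseteq S\}$, which is $(\ell-|K|)$-uniform on $[n]\setminus K$ with $|\mathcal{G}|\ge r^{-|K|}|\mathcal{F}|\ge 1$. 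Maximality of $K$ makes $\mathcal{G}$ $r$-spread: for nonempty $T\subseteq[n]\setminus K$, $\#\{S'\in\mathcal{G}:T\subseteq S'\}=\#\{S\in\mathcal{F}:K\cup T\subseteq S\}< r^{-|K\cup T|}|\mathcal{F}|=r^{-|T|}\cdot r^{-|K|}|\mathcal{F}|\le r^{-|T|}|\mathcal{G}|$. Since $\ell-|K|\le\ell$, the choice of $r$ meets the hypothesis of the spread lemma applied to $\mathcal{G}$, so a $p$-biased $W'\subseteq[n]\setminus K$ contains some $S\setminus K$ with probability $\ge 1-\varepsilon$. Unravelling, the subfamily $\mathcal{F}_K:=\{S\in\mathcal{F}:K\subseteq S\}$ satisfies $\Pr_{W'}[\exists S\in\mathcal{F}_K:\ S\subseteq W'\cup K]\ge 1-\varepsilon$; and since the true core $C':=\bigcap_{S\in\mathcal{F}_K}S$ contains $K$, the same holds with $W\cup C'$ for a $p$-biased $W\subseteq[n]$. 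Hence $\mathcal{F}_K\subseteq\mathcal{F}$ is a $(p,\varepsilon)$-robust sunflower, which gives $\RB(\ell,p,\varepsilon)\le r^\ell+1=O\bigl(p^{-1}(\log(1/\varepsilon)+\log\ell)\bigr)^{\ell}$.

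\emph{Step 2: the spread lemma.} Here I would follow the AKLWZ iterative scheme as streamlined by Rao. Sample $W=U_1\cup\dots\cup U_t$ as a union of $t=O(\log m)$ independent $p'$-biased sets, with $p'=\Theta(p/t)$ chosen so that $W$ is, up to constant factors, $p$-biased. Track a shrinking sequence $\mathcal{G}=\mathcal{G}_0\supseteq\mathcal{G}_1\supseteq\dots$, where $\mathcal{G}_i$ collects the residues $S\setminus(U_1\cup\dots\cup U_i)$ of members that have ``survived'' so far, each of size $\le m/2^i$; the invariant is that $\mathcal{G}_i$ is again $r$-spread, re-extracting an $r$-spread link as in Step~1 if necessary (which only enlarges the eventual core). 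The engine is the \emph{single-round claim}: if $\mathcal{H}$ is $s$-uniform and $r$-spread and $U$ is $p'$-biased with $p'r\ge C'\log(st/\varepsilon)$, then with probability $\ge 1-\varepsilon/t$ some $S\in\mathcal{H}$ has $|S\setminus U|\le|S|/2$. Running $t$ rounds and union-bounding the failure probabilities by $t\cdot(\varepsilon/t)=\varepsilon$, after the last round some residue has size $0$, i.e.\ the corresponding original set lies in $W$.

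\emph{Main obstacle.} The single-round claim is the heart of the matter and the hard step. Its proof is a compression/encoding argument: if the claim failed with probability exceeding $\varepsilon/t$, then on the bad event a uniformly random member $S^\ast\in\mathcal{H}$ together with $U$ could be recorded using strictly fewer than $\log|\mathcal{H}|$ bits --- store the small trace $S^\ast\cap U$, then pin down $S^\ast$ among the members compatible with that trace, whose count is controlled by $r$-spreadness --- producing a probability inequality that contradicts the assumed failure probability; the threshold $r\gtrsim (p')^{-1}\log(\cdot)$ is exactly where the two sides cross, and obtaining the \emph{clean} base $p^{-1}(\log(1/\varepsilon)+\log\ell)$ rather than a version carrying an extra $\log\ell$ factor is precisely what the refinements of Rao and of Bell--Suri--Warnke buy over the original AKLWZ iteration. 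The remaining ingredients --- fixing $t$ and $p'$, the union-bound budget, and the inheritance of $r$-spreadness across rounds --- are routine bookkeeping. (One could alternatively route the argument through the fractional-expectation-threshold machinery of Frankston--Kahn--Narayanan--Park also presented in Rao's exposition, but the direct spread-lemma path above is the most economical for this precise bound.)
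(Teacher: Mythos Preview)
The paper does not give its own proof of \Cref{thm:robust-sunflowers}; it is stated with a citation to \cite{alweiss20,bell2021note,rao2023sunflowers} and used as a black box. So there is nothing in the paper to compare your proposal against.

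That said, your sketch is a faithful outline of how the cited references establish the bound: the reduction in Step~1 (pass to a maximal dense link to obtain an $r$-spread subfamily whose core contains the chosen $K$) is the standard one, and Step~2 correctly identifies the iterative halving scheme with $t=O(\log m)$ rounds and the encoding/compression argument for the single-round claim as the engine, with the Rao/Bell--Suri--Warnke refinements responsible for the clean $p^{-1}(\log(1/\varepsilon)+\log\ell)$ base. One small quibble: in Step~1 you conclude that $\mathcal{F}_K$ is a $(p,\varepsilon)$-robust sunflower, but strictly speaking the core in \Cref{def:robust-sunflower} is the full intersection $C'=\bigcap_{S\in\mathcal{F}_K}S$, not $K$; you handle this correctly by noting $K\subseteq C'$, so the event ``$\exists S\subseteq W\cup K$'' only becomes more likely when $K$ is replaced by $C'$. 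Beyond that, the sketch is sound as a summary of the literature, though of course the actual encoding inequality in the single-round claim requires care to write out in full.
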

This theorem fairly easily implies the breakthrough upper bound on sunflower numbers~\eqref{eq:new-sunflower-numbers}. For our purposes, though, the sunflower consequence of~\Cref{thm:robust-sunflowers} is less relevant, and we will leverage the existance of robust sunflowers more directly.

In~\Cref{sec:reduction} we prove a much stronger reduction than~\eqref{eq:simple-rcb-rb-comparison} showing how robust sunflower bounds can be used to give bounds on robust clique-sunflowers, that leads to our main improvement.

\begin{restatable}{theorem}{thmrcbcomparison}
\label{thm:rcb-comparison}
For any $\ell \geq 1$, $p \in (0,1)$ and $\varepsilon \in (0,1)$ we have
\begin{equation*}
    \RCB(\ell, p, \varepsilon) \leq \RB(\ell, p^\ell, \varepsilon/ \ell^2),
\end{equation*}
where $\RCB(\ell, p, \varepsilon)$ and $\RB(\ell, p, \varepsilon)$ are defined as in \Cref{def:robust-sunflower-function}.
\end{restatable}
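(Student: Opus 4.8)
## Proof proposal for Theorem~\ref{thm:rcb-comparison}

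The plan is to take an $\ell$-uniform set system $\mathcal{S}$ of size at least $\RB(\ell, p^\ell, \varepsilon/\ell^2)$ and exhibit inside it a $(p,\varepsilon)$-robust clique sunflower. The natural bridge is to pass from the clique setting (where the randomness is a graph $G \sim \mathcal{G}_{n,p}$ and the event is ``$K_S \subseteq G \cup K_C$'') to the vertex setting (where the randomness is a random set $W$ and the event is ``$S \subseteq W \cup C$''), so that the hypothesis of \Cref{def:robust-sunflower} can be invoked. The key observation making this possible is that if $W$ is a $p^\ell$-random subset of $[n]$, and we build a graph $G$ by a suitable coupling, then the event $S \subseteq W \cup C$ for a set $S$ of size $\ell$ should be ``harder'' than (i.e.\ imply, up to the loss accounted for by the $\varepsilon/\ell^2$ slack) the event $K_S \subseteq G \cup K_C$; the exponent $\ell$ in $p^\ell$ is exactly the degree of each vertex inside an $\ell$-clique, so $p^\ell$ is the right per-vertex ``inclusion probability'' to mimic a vertex being fully connected to the rest of the clique.

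First I would apply \Cref{def:robust-sunflower-function} directly: since $|\mathcal{S}| \geq \RB(\ell, p^\ell, \varepsilon/\ell^2)$, the system $\mathcal{S}$ contains a subfamily $\mathcal{S}' \subseteq \mathcal{S}$ that is a $(p^\ell, \varepsilon/\ell^2)$-robust sunflower, say with core $C = \bigcap_{S \in \mathcal{S}'} S$. It then suffices to show that this same $\mathcal{S}'$ is a $(p, \varepsilon)$-robust clique sunflower (with the same core). So I must bound
$$\Pr_G\bigl(\nexists\, S \in \mathcal{S}',\ K_S \subseteq G \cup K_C\bigr) \leq \varepsilon,$$
given the analogous vertex statement $\Pr_W\bigl(\nexists\, S\in \mathcal{S}',\ S \subseteq W\cup C\bigr) \leq \varepsilon/\ell^2$ for $W$ a $p^\ell$-random set. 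The coupling I would use: sample $G \sim \mathcal{G}_{n,p}$; for each vertex $v \notin C$, declare $v$ to be ``good'' if $v$ is adjacent in $G$ to every vertex of $C$ — wait, $C$ may be small, so instead the cleaner route is: for a fixed candidate $S$ containing $C$, note $K_S \subseteq G\cup K_C$ holds iff every edge of $K_S$ outside $K_C$ is present in $G$; group these missing edges and use independence across the roughly $\ell$ vertices of $S \setminus C$. Concretely, define $W := \{\, v \in [n] : v$ is $G$-adjacent to all of $S^\star \setminus \{v\}\,\}$ for an auxiliary reference — but since $S$ varies this needs care, which is the crux below.

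The clean way I would actually organize the argument: for each vertex $v \in [n]\setminus C$, reveal the edges of $G$ incident to $v$ and going ``forward'' in some fixed vertex ordering; let $W$ be a random set built so that $\Pr[v \in W]$ equals the probability that $v$ is adjacent to all later vertices of a size-$\ell$ set — but to make $W$ genuinely $p^\ell$-random and to make ``$S\subseteq W\cup C \Rightarrow K_S\subseteq G\cup K_C$'' hold requires committing to which $\ell-1$ neighbours to test. The standard trick is a union bound over the ``layers'' of the clique: writing each $S\in\mathcal{S}'$ with $|S\setminus C| = \ell - |C| =: t \le \ell$, one reveals $G$ in $t$ rounds, and in round $j$ defines a $p$-ish random restriction; the product over $t \le \ell$ rounds gives the $p^\ell$ factor, and the $\ell^2$ (really $t^2 \le \ell^2$, or a union bound over $t \le \ell$ choices combined with a further factor) absorbs the union-bound loss from handling the rounds separately. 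The main obstacle, and the step I expect to require the most care, is precisely this coupling: making the reduction from a graph-indexed event to a set-indexed event uniform over all sets $S$ in the family simultaneously (not just one fixed $S$), while controlling the probabilistic loss by the claimed $\ell^2$ factor and getting exactly the exponent $p^\ell$. Once the coupling is set up so that ``some $S\in\mathcal{S}'$ satisfies $S\subseteq W\cup C$'' implies ``some $S\in\mathcal{S}'$ satisfies $K_S\subseteq G\cup K_C$'' with probability loss $\le \varepsilon - \varepsilon/\ell^2$, the theorem follows immediately by combining with the robust sunflower guarantee for $\mathcal{S}'$.
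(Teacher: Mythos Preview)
Your high-level reduction is exactly right: extract from $\mathcal{S}$ a $(p^\ell,\varepsilon/\ell^2)$-robust sunflower $\mathcal{S}'$ with core $C$, and then prove that this same $\mathcal{S}'$ is a $(p,\varepsilon)$-robust clique sunflower. The paper proceeds the same way, and this reduction is stated there as a separate lemma.

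The gap is in the step you yourself flag as the ``main obstacle.'' You are trying to build a coupling between a $p^\ell$-random vertex set $W$ and $G\sim\mathcal{G}_{n,p}$ so that the \emph{event} $\{S\subseteq W\cup C\}$ implies the \emph{event} $\{K_S\subseteq G\cup K_C\}$ simultaneously for every $S\in\mathcal{S}'$. Your two concrete attempts both break: declaring $v$ ``good'' when it is $G$-adjacent to all of $C$ gives inclusion probability $p^{|C|}$, not $p^{\ell}$, and the resulting $W$ does not witness the edges \emph{inside} $S\setminus C$; the ``reveal $G$ in $t$ rounds'' idea needs, in each round, to test adjacency of a vertex to the other vertices \emph{of the particular $S$ under consideration}, so the rule defining $W$ cannot be made uniform over the whole family. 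There is no evident event-level coupling with only an $\ell^2$ loss, and your sketch does not supply one.

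The paper avoids this obstacle by \emph{not} comparing events at all. Instead it compares \emph{expected suprema of sums}: with i.i.d.\ Bernoulli$(p)$ arrays $A$ and $\tilde A$, it shows
\[
\E\,\sup_{S\in\mathcal{S}'}\ \sum_{(i,j)\in S\times[\ell]} A_{i,j}\ \le\ \E\,\sup_{S\in\mathcal{S}'}\ \sum_{(i,j)\in \phi(S)} \tilde A_{i,j}
\]
for any ``proper lifting'' $\phi$ (in particular $\phi(S)=S\times(C\cup S)$). This comparison lemma is proved by an interpolation argument, changing $\phi$ one row at a time and, in each row, coupling via a permutation so that the two suprema are ordered pointwise. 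The robust-sunflower hypothesis forces the left-hand expectation to be at least $\ell k-\varepsilon$ (here $k=|S\setminus C|$, and this is where the $\ell^2$ is spent: the sums lie in $[0,\ell k]\subseteq[0,\ell^2]$, so probability $\varepsilon/\ell^2$ of missing costs at most $\varepsilon$ in expectation). The comparison then forces the right-hand expectation to be at least $\ell k-\varepsilon$, which in turn bounds by $\varepsilon$ the probability that \emph{no} $S$ has its $\phi(S)$-block all ones in $\tilde A$; reading $\tilde A$ as an adjacency matrix, that is precisely the robust clique sunflower conclusion. So the missing idea is a Slepian/Talagrand-style process comparison rather than a direct coupling, and the $\ell^2$ slack is used in the expectation-to-probability conversion, not in a union bound over rounds.
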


Combining the upper bound for robust sunflowers in~\Cref{thm:robust-sunflowers} with our comparison theorem~(\Cref{thm:rcb-comparison}), we get
\begin{corollary}
\label{cor:our-rcb}
    The clique sunflower numbers are bounded as
    \begin{equation*}
        \RCB(\ell, p, \varepsilon) \leq O(p^{-\ell} (\log(1/\varepsilon) + \log \ell))^{\ell}.
    \end{equation*}
\end{corollary}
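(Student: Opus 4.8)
The plan is to prove \Cref{thm:rcb-comparison} directly, and \Cref{cor:our-rcb} then follows immediately by substituting the bound of \Cref{thm:robust-sunflowers} with $p$ replaced by $p^\ell$ and $\varepsilon$ replaced by $\varepsilon/\ell^2$: indeed $\RB(\ell, p^\ell, \varepsilon/\ell^2) \le O\!\left(p^{-\ell}(\log(\ell^2/\varepsilon) + \log\ell)\right)^\ell = O\!\left(p^{-\ell}(\log(1/\varepsilon)+\log\ell)\right)^\ell$, where we absorbed the constant factors and the $\log(\ell^2) = 2\log\ell$ term into the $O(\cdot)$. So the entire content is the comparison theorem.

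For \Cref{thm:rcb-comparison}, the goal is: given an $\ell$-uniform family $\mathcal{S}$ of sets over $[n]$ with $|\mathcal{S}| \ge \RB(\ell, p^\ell, \varepsilon/\ell^2)$, extract a subfamily that is an $(p,\varepsilon)$-robust clique sunflower. The natural approach is to apply the robust \emph{sunflower} guarantee to $\mathcal{S}$ itself with the shifted parameters, obtaining a subfamily $\mathcal{S}' \subseteq \mathcal{S}$ which is an $(p^\ell, \varepsilon/\ell^2)$-robust sunflower with core $C$; that is, a random set $W$ including each vertex independently with probability $p^\ell$ satisfies $\Pr_W[\exists S \in \mathcal{S}',\, S \subseteq W \cup C] \ge 1 - \varepsilon/\ell^2$. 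The key step is then to translate this ``vertex-subset'' robustness at density $p^\ell$ into ``clique-subset'' robustness at edge-density $p$: we must show that for $G \sim \mathcal{G}_{n,p}$, with probability $\ge 1-\varepsilon$ there is some $S \in \mathcal{S}'$ with $K_S \subseteq G \cup K_C$. The idea is a coupling/sampling argument: to find $S$ with $K_S \subseteq G \cup K_C$, it suffices (since $S \supseteq C$ and edges inside $C$ are already present in $K_C$) that the vertex set $S \setminus C$ is ``fully connected'' in $G$ to everything it needs. One exposes the randomness of $G$ vertex-by-vertex. For each vertex $v$, independently with probability roughly $p^{\ell}$ (an upper bound on $p^{|S\setminus C|} \ge p^\ell$ failing — here one needs $|S \setminus C| \le \ell$) we declare $v$ ``good.'' The set $W$ of good vertices then stochastically dominates (or is dominated appropriately by) a $p^\ell$-random set, and if $S \setminus C \subseteq W$ then all the required edges among the at most $\binom{\ell}{2}$ pairs touching $S\setminus C$ are present; there is some care needed because the events ``$v$ good'' for different $v$ must be made independent, which is why one exposes vertices one at a time and only ever queries each edge once.

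The main obstacle — and the step requiring the most care — is exactly this coupling: the event $K_S \subseteq G \cup K_C$ depends on edges, not vertices, and naively the events for two candidate sets $S_1, S_2$ share edges, so one cannot directly say ``$v_1$ good'' and ``$v_2$ good'' are independent. The fix is to process the vertices of $[n] \setminus C$ in a fixed order $v_1, v_2, \dots$; when processing $v_i$, we look only at the edges from $v_i$ to $C \cup \{v_1, \dots, v_{i-1}\}$ that have not yet been examined, and declare $v_i$ good if all of a suitably chosen set of at most $\ell - 1$ such edges are present — giving conditional probability at least $p^{\ell-1} \ge p^\ell$ of being good, independently across $i$ conditioned on the past. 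This yields a genuinely $p^\ell$-(or-denser)-random good set $W$, and by the robust sunflower property applied to $\mathcal{S}'$, with probability $\ge 1 - \varepsilon/\ell^2$ some $S \in \mathcal{S}'$ has $S \setminus C \subseteq W$; one then checks that ``good'' was defined so that $S\setminus C \subseteq W$ forces $K_S \subseteq G \cup K_C$. The factor $\ell^2$ slack (rather than just a constant) in the failure probability, and the $p^\ell$ rather than $p^{\binom{\ell}{2}}$ in the density, are the two places where one must be slightly generous to make the vertex-exposure argument go through cleanly; I would expect the write-up to define, for each candidate $S$, an ordering-dependent spanning structure on $S \setminus C$ (e.g. a path or star realizing the needed edges so each vertex after the first contributes only one new mandatory edge, plus edges to $C$), and to union-bound over the $O(\ell)$ vertices whose neighborhoods into $C$ also need to be handled, which is where the $\ell^2$ enters.
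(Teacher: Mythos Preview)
Your deduction of the corollary from \Cref{thm:rcb-comparison} and \Cref{thm:robust-sunflowers} is correct and matches the paper exactly, and your high-level plan for \Cref{thm:rcb-comparison}---extract a $(p^\ell,\varepsilon/\ell^2)$-robust sunflower $\mathcal{S}'$ with core $C$ and show it is already a $(p,\varepsilon)$-robust clique sunflower---is also precisely what the paper does (\Cref{lem:robust-sunflower-is-clique-sunflower}).

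The gap is in the implication ``$(p^\ell,\varepsilon/\ell^2)$-robust sunflower $\Rightarrow$ $(p,\varepsilon)$-robust clique sunflower.'' Your vertex-exposure coupling cannot be completed as sketched. You want a rule that, for each vertex $v_i$, inspects at most $\ell$ edges of $G$ and declares $v_i$ ``good'' with (conditional) probability $\ge p^\ell$, in such a way that $S\setminus C\subseteq W$ forces $K_S\subseteq G\cup K_C$. The latter requires in particular that all edges \emph{inside} $S\setminus C$ be present; but when you process $v_i$ you do not know which $S\in\mathcal{S}'$ will eventually be covered, so you cannot tell which of the edges from $v_i$ to earlier vertices are the ones inside $S\setminus C$. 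Requiring all edges from $v_i$ to \emph{every} earlier vertex kills the $p^\ell$ bound; requiring edges only to earlier \emph{good} vertices (or to some fixed $\ell-1$ earlier vertices) does not certify the edges inside $S\setminus C$ unless those vertices happen to lie in $S$. Your suggestion of ``an ordering-dependent spanning structure on $S\setminus C$'' would make the rule $S$-dependent, after which $W$ is no longer a single $p^\ell$-random set and the robust-sunflower hypothesis no longer applies. This is a genuine obstruction, not just a detail to be filled in.

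The paper resolves this differently: instead of a global coupling, it proves a comparison of \emph{expected suprema} (\Cref{lem:comparison-lemma}). With $A\in\{0,1\}^{n\times \ell}$ and $\tilde A\in\{0,1\}^{n\times n}$ having i.i.d.\ Bernoulli$(p)$ entries, and a ``proper lifting'' $\phi(S)=S\times(C\cup S)$ on the $k$-uniform family $\{S\setminus C:S\in\mathcal{S}\}$, one shows
\[
\E\,\sup_{S}\sum_{(i,j)\in S\times[\ell]} A_{i,j}\;\le\;\E\,\sup_{S}\sum_{(i,j)\in \phi(S)} \tilde A_{i,j}.
\]
The robust-sunflower hypothesis makes the left side at least $k\ell-\varepsilon$ (this is exactly where the $\ell^2\ge k\ell$ slack is used---to pass from probability to expectation, not as a union bound over vertices), and since the right side is at most $k\ell-p_0$ where $p_0$ is the failure probability for the clique-sunflower event, one gets $p_0\le\varepsilon$. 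The comparison lemma itself is proved by a row-by-row hybrid: fixing all rows but row $t$, one couples the remaining randomness by a permutation that sends the $\ell$ relevant columns of the \emph{currently optimal} $S$ to $[\ell]$. The point is that this coupling is allowed to depend on the conditioning, hence implicitly on $S$---exactly what a direct coupling of the full experiment cannot do.
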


This is an improvement over a bound from~\cite[Lemma 3.2]{cavalar2022} of the same quantity. They showed an upper bound
\begin{equation}
\label{eq:cavalar-rcb-bound}
    \RCB(\ell, p, \varepsilon) \leq p^{-\binom{\ell}{2}} O(\ell \log (1/\varepsilon))^\ell.
\end{equation}
As we will see later, reducing the $O(\ell)^\ell$ factor down to $O(\log \ell)^\ell$ allowed us to show a lower bound $2^{\tilde\Omega(\sqrt{n})}$ for the clique problem as opposed to $2^{\tilde\Omega(n^{1/3})}$ from~\cite{alon87,cavalar2022}.

In order to prove~\Cref{thm:rcb-comparison} we show that any $(p^\ell, \varepsilon/\ell^2)$-robust sunflower is an $(p, \varepsilon)$-robust clique sunflower. This statement is easier to interpret for sunflowers with empty core: given a $\ell$ uniform family of sets $\mathcal{F} \subset 2^{[n]}$ (with empty common intersection), if we are very likely to cover one of those sets while sampling vertices of $[n]$ independently at random with probability $p^{\ell}$, we are also very likely to cover one of those with a clique, when sampling edges of $\binom{[n]}{2}$ independently with probability $p$.

As it turns out, one can set up specific stochastic processes $\{Y_S\}_{S \in \mathcal{F}}$, and $\{Y'_S\}_{S \in \mathcal{F}}$ associated with both of those experiments~--- the expected supremum of these processes will be directly connected (respectively) with the probability of covering one of the sets of $S \in \mathcal{F}$ by a random set (with the inclusion probability $p^\ell$), or the probability of covering one of the cliques on the vertices of $S \in \mathcal{F}$ by a random graph (with edge probability $p$). The main technical part of the proof is the comparison lemma stating that $\E \sup_{S \in \mathcal{F}} Y_S \leq \E \sup_{S \in \mathcal{F}} Y'_S$~---theorems of that form appeared in the literature on the theory of stochastic processes (most well known is the Slepian lemma, or Gaussian comparison principle~\cite[Corollary 2.10.12]{talagrand2014upper}). Even though we could not apply any of the known comaprison lemmas black-box, we adapt the ideas that were in~\cite{talagrand93} to show a comparison lemma for coordinate-wise contractions of cannonical Bernoulli processes, and we show the desired inequality for our two specific stochastic processes in question.
\subsection{Lower bounds for monotone circuits depending on sunflower bounds \label{sec:intro-clique-sunflower-bounds}}
We carry over the high-level structure of the proof outlined in~\Cref{sec:intro-approx-method} in more details in~\Cref{sec:approximation-method}. This part of the proof is technically very similar to known results applying the approximation method (for example \cite{razborov1985lower,alon87,cavalar2022}), although in contrast with the previous works we made an effort to provide a statement of the lower bound theorem parameterized by the robust clique sunflower numbers $\RCB(\ell, p, \varepsilon)$ --- which, in turn, can be upper bounded in several ways by the robust sunflower numbers $\RB(\ell, p, \varepsilon)$, or sunflower numbers $S(\ell ,k)$ as discussed in \Cref{sec:intro-sunflowers}.

Making these dependencies explicit, as opposed to choosing optimal values of relevant parameters ahead of time based on currently best known bounds on, say, sunflower numbers, makes the interplay between sunflower-like combinatorial statements and monotone lower bounds for the clique problem much clearer.

Moreover, our analysis allows us to show the inapproximability results (\Cref{thm:main-lb}) --- i.e. hardness of distinguishing a random clique of size $\beta$, from a random graph $\mathcal{G}_{n,p}$ which is unlikely to have even a clique of size $\alpha$ for $\alpha < \beta$. 

The analysis of \cite{razborov1985lower,cavalar2022} focused on the excact case, i.e. $\beta = \alpha$; whereas in \cite{razborov1985lower,alon87}, the negative distribution was chosen to be a random complete $(\beta-1)$-partite graph, but this choice was not crucial in their reasoning~--- only simple modifications are needed to adapt their proofs to the $\mathcal{G}_{n,p}$ being the negative distribution.

Specifically, we prove the following.
\begin{restatable}{theorem}{cliqueboundrcb}
\label{thm:clique-bound-rcb}
    Fix some $c \leq n$.  If for every $\ell \leq 2c$, we have
    $$(\beta / n) \RCB(\ell, p, \varepsilon)^{1/\ell} \leq \gamma \leq o(1)$$
    then the size of any monotone circuit $C$ distinguishing $\mathcal{T}^-_\alpha$ and $\mathcal{T}^+_\beta$ satisfies
    $$size(C) \geq \Omega(\min\{\gamma^{-c}, \ n^{-2c}/\varepsilon \}).$$
    In particular, choosing $\varepsilon = n^{-4c}$, if for all $\ell \leq 2 c$ we have
    \begin{equation*}
        (\beta / n) \RCB(\ell, p, n^{-4 c})^{1/\ell} < n^{-\delta}
    \end{equation*}
    then the monotone complexity of distinguishing $\mathcal{T}^-_\alpha$ and $\mathcal{T}^+_\beta$ is  $\Omega(n^{\delta c})$.
\end{restatable}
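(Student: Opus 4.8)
The plan is to run Razborov's approximation method in the form of~\cite{alon87,cavalar2022}, but keeping every combinatorial quantity abstract as $\RCB(\ell,p,\varepsilon)$, so that the argument becomes a black box turning clique‑sunflower bounds into clique circuit lower bounds; the parameters enter only through $p=n^{-2/(\alpha-1)}$ (hence through $\RCB$) and through the positive distribution $\mathcal{T}^+_\beta$, so the ``gap'' case $\alpha<\beta$ needs nothing beyond the $\alpha=\beta$ analysis. I would call a monotone function an \emph{approximator} if it is computed by a DNF $\bigvee_{A\in\mathcal{F}}\mathcal{K}_A$ in which every term has $|A|\le c$ and, for each $\ell\le c$, the family $\mathcal{F}$ has fewer than $\RCB(\ell,p,\varepsilon)$ sets of size $\ell$; then $|\mathcal{F}|<M:=\sum_{\ell=0}^{c}\RCB(\ell,p,\varepsilon)$, and the hypothesis $(\beta/n)\RCB(\ell,p,\varepsilon)^{1/\ell}\le\gamma$ with $\gamma=o(1)$ forces $\RCB(\ell,p,\varepsilon)\le n^\ell$, hence $M\le 2n^{c}$. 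Given an arbitrary finite family $\mathcal{G}$ I define its \emph{closure} by repeatedly (i) discarding a set that properly contains another member, and (ii) replacing a $(p,\varepsilon)$-robust clique sunflower $\mathcal{S}\subseteq\mathcal{G}$ with $|\mathcal{S}|\ge 2$ (so its core $C$ is a proper subset of every petal) by the single set $\{C\}$, until neither move applies, and then (iii) \emph{truncating}, i.e.\ deleting every set of size $>c$. Steps (i) and (ii) strictly shrink the family (so at most $|\mathcal{G}|$ of them occur), (i) leaves the computed function unchanged, (ii) makes it pointwise larger (since $C\subseteq S$ gives $\mathcal{K}_C\ge\mathcal{K}_S$), truncation makes it pointwise smaller, and once (i)--(ii) halt no size class $\ell\le 2c$ can carry $\ge\RCB(\ell,p,\varepsilon)$ sets (else \Cref{def:robust-sunflower-function} supplies one more sunflower), so after truncation we again have an approximator.

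Next I would walk over the gates of a circuit $C$ in topological order, attaching $\widehat C_g$ to each gate $g$. An input $x_{uv}$ is already $\mathcal{K}_{\{u,v\}}$; for an $\lor$-gate I let $\widehat C_g$ be the closure of $\mathcal{F}_{g_1}\cup\mathcal{F}_{g_2}$ (no truncation, all terms have size $\le c$); for an $\land$-gate I use $\mathcal{K}_{A_1}\wedge\mathcal{K}_{A_2}=\mathcal{K}_{A_1\cup A_2}$ to rewrite $\widehat C_{g_1}\wedge\widehat C_{g_2}=\bigvee_{A_1\in\mathcal{F}_{g_1},\,A_2\in\mathcal{F}_{g_2}}\mathcal{K}_{A_1\cup A_2}$ --- whose terms have size up to $2c$, which is exactly why the hypothesis is demanded for all $\ell\le 2c$ --- and take $\widehat C_g$ to be its closure. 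The family fed to a closure has size at most $M$ (for $\lor$) or $M^2$ (for $\land$), so each gate performs at most $M^2$ pluckings, and a circuit of size $s$ performs at most $sM^2$ pluckings in total. Writing $C_g$ for the function the actual circuit computes at $g$, I track the one-sided errors $\mathrm{err}^+_g:=\Pr_{\mathcal{T}^+_\beta}[C_g=1,\widehat C_g=0]$ and $\mathrm{err}^-_g:=\Pr_{\mathcal{T}^-_\alpha}[C_g=0,\widehat C_g=1]$. Since an $\lor$-step only enlarges $\widehat C$ and an $\land$-step enlarges then truncates, a case analysis on whether the naive combination of the children's approximators already disagrees with $C_g$ gives
\[
\mathrm{err}^+_g\le\mathrm{err}^+_{g_1}+\mathrm{err}^+_{g_2}+\mathbf{1}[g\text{ is a }\land\text{-gate}]\sum_{\ell=c+1}^{2c}\RCB(\ell,p,\varepsilon)(\beta/n)^{\ell},\qquad \mathrm{err}^-_g\le\mathrm{err}^-_{g_1}+\mathrm{err}^-_{g_2}+(\#\text{pluckings at }g)\cdot\varepsilon,
\]
because truncating a term $\mathcal{K}_A$ with $|A|=\ell$ removes from $\mathcal{T}^+_\beta$ only graphs containing a fixed $\ell$-clique (probability $\le(\beta/n)^\ell$), while plucking a sunflower with core $C$ turns $\widehat C$ on for a graph $G$ only when $K_C\subseteq G$ yet no petal is covered, an event of conditional probability $\le\varepsilon$ straight from \Cref{def:robust-clique-sunflower} (and truncation never turns $\widehat C$ on, so it does not feed $\mathrm{err}^-$).

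Summing these recursions over all $s$ gates, bounding the geometric series with $\RCB(\ell,p,\varepsilon)^{1/\ell}(\beta/n)\le\gamma=o(1)$ and using $M^2\le 4n^{2c}$, I get $\mathrm{err}^+_{out}\le 2s\gamma^{c+1}$ and $\mathrm{err}^-_{out}\le\varepsilon sM^2\le 4\varepsilon s\,n^{2c}$. The output approximator $\widehat C_{out}=\bigvee_{A\in\mathcal{F}}\mathcal{K}_A$ either is the constant $0$ or $1$ (when $\mathcal{F}\in\{\emptyset,\{\emptyset\}\}$), in which case it has zero gap between $\mathcal{T}^+_\beta$ and $\mathcal{T}^-_\alpha$, or has all terms nonempty, in which case a union bound gives $\E_{\mathcal{T}^+_\beta}[\widehat C_{out}]\le\sum_{\ell=1}^{c}\RCB(\ell,p,\varepsilon)(\beta/n)^\ell\le\sum_{\ell\ge 1}\gamma^\ell\le 2\gamma=o(1)$; either way $\E_{\mathcal{T}^+_\beta}[\widehat C_{out}]-\E_{\mathcal{T}^-_\alpha}[\widehat C_{out}]\le o(1)$. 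Finally, if $C$ solves $\mathcal{D}(\mathcal{T}^-_\alpha,\mathcal{T}^+_\beta)$ then $\E_{\mathcal{T}^+_\beta}[C]-\E_{\mathcal{T}^-_\alpha}[C]\ge 1/3$, and inserting $\widehat C_{out}$ gives $1/3\le(\E_{\mathcal{T}^+_\beta}[\widehat C_{out}]-\E_{\mathcal{T}^-_\alpha}[\widehat C_{out}])+\mathrm{err}^+_{out}+\mathrm{err}^-_{out}\le o(1)+2s\gamma^{c+1}+4\varepsilon s\,n^{2c}$, so for large $n$ one of the last two terms is at least $1/8$, which yields $s=\Omega(\gamma^{-c})$ or $s=\Omega(n^{-2c}/\varepsilon)$, i.e.\ $s=\Omega(\min\{\gamma^{-c},n^{-2c}/\varepsilon\})$. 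The ``in particular'' is the instance $\varepsilon=n^{-4c}$, $\gamma=n^{-\delta}$, for which $n^{-2c}/\varepsilon=n^{2c}\ge n^{\delta c}$.

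Most of this is, as the paper warns, routine bookkeeping once the framework is fixed (it follows~\cite{alon87,cavalar2022} closely). The two places I expect to need care are: doing the error accounting \emph{one-sidedly} --- keeping $\mathrm{err}^+$ and $\mathrm{err}^-$ separate, and checking that pluckings never cost anything on the positive side and truncations never cost anything on the negative side, since a symmetric $\E|\widehat C-\mathrm{op}(\cdot)|$ bound would be far too lossy on $\mathcal{T}^-_\alpha$ --- and the conservation estimate $\sum_g(\#\text{pluckings at }g)\le sM^2$, which is the source of the $n^{-2c}/\varepsilon$ term and of the choice $\varepsilon=n^{-4c}$; getting the clean $n^{-2c}$ out, rather than a larger power of $n$, is precisely what the crude bound $\RCB(\ell,p,\varepsilon)\le n^\ell$ buys.
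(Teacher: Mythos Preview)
Your proposal is correct and follows essentially the same route as the paper's proof: close-then-truncate approximators, one-sided error tracking on $\mathcal{T}^+_\beta$ and $\mathcal{T}^-_\alpha$ separately, and a final triangle inequality against the approximator's inability to distinguish; the only presentational differences are that the paper packages the per-gate errors into constants $\zeta^\pm_{\mathcal P}$ and invokes an abstract lemma $\size\ge\Omega(\min(1/\zeta^+,1/\zeta^-))$ instead of unrolling the recursion, and bounds the number of pluckings per gate by ``each possible core of size $\le 2c$ is added at most once'' rather than your ``each plucking shrinks the input family of size $\le M^2\le 4n^{2c}$'' (both give $O(n^{2c})$). One caveat worth flagging: the identity $\mathcal{K}_{A_1}\wedge\mathcal{K}_{A_2}=\mathcal{K}_{A_1\cup A_2}$ you invoke at an $\land$-gate is false as a boolean identity (the right-hand side also demands the cross-edges between $A_1\setminus A_2$ and $A_2\setminus A_1$); it \emph{does} hold on the support of $\mathcal{T}^+_\beta$ (which consists of cliques), and the replacement only decreases the function on $\mathcal{T}^-_\alpha$, so your one-sided bookkeeping is unaffected---this is precisely the paper's ``step 1 introduces no error on either distribution'' observation.
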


It is instructive to see how to essentially recover Alon-Boppana bounds from~\Cref{thm:clique-bound-rcb}, using a simple lemma stating that large enough sunflowers are robust clique sunflower (\Cref{lem:sunflower-is-clique-sunflower}) together with the new bounds on the sunflower numbers \eqref{eq:new-sunflower-numbers}. In this case we have an upper bound on the \begin{equation*}
\RCB(\ell, p, \varepsilon)^{1/\ell} \leq p^{-\ell^2} (\log(1/\varepsilon) + \log \ell) \leq n^{O(\frac{c^2}{\alpha})} c \log n,
\end{equation*}
so taking $c := O(\sqrt{\delta \alpha})$ for a small constant $\delta$, such that $\RCB(\ell, p, \varepsilon)^{1/\ell} \leq n^{\delta} \sqrt{\alpha} \log n$, we end up with a complexity lower bound $n^{\Omega(\delta c)} = n^{\Omega(\sqrt{\alpha})}$, as long as
\begin{equation*}
    \sqrt{\alpha} \beta \log n \lesssim n^{1 - 2\delta},
\end{equation*}
recovering \cite[Theorem 3.11]{alon87}. Setting $\alpha = \beta$, gives a lower bound of form $n^{\Omega(\sqrt{\alpha})}$ as long as $\alpha \leq n^{2/3 - \delta}$.

In a similar vein, plugging in the upper bound~\eqref{eq:cavalar-rcb-bound} for the $\RCB(\ell, p, \varepsilon)$ (originally shown in \cite[Lemma 3.2]{cavalar2022}), yields $n^{\Omega(\alpha)}$ lower bound, as long as $\alpha^2 \beta \leq n^{1-\delta}/\log n$, and again chosing $\alpha=\beta$ yields an $n^{\Omega(\alpha)}$ for $\alpha \leq n^{1/3 - \delta}$, recovering their \cite[Theorem 3.23]{cavalar2022}.

Finally,~\Cref{thm:clique-bound-rcb} together with our new bounds for the robust clique numbers~(\Cref{cor:our-rcb}) directly imply the claimed lower bound in~\Cref{thm:main-lb}.

\section{Approximating Cliques \label{sec:approximation-method}}

In this chapter, we will give an extended version of the approximation method, and leverage our improved robust clique sunflower bound to strengthen the lower bound for $\cliquep_k$ to $n^{\Omega(k)}$ if $k\leq n^{1/2 -\delta }$.

\subsection{Abstract Approximation Method}



For sake of abstraction, we describe a general inductive procedure of converting any monotone circuit $C$ computing a distributional decision problem $\mathcal{D}(\mathcal{T}^-, \mathcal{T}^+)$, gate-by-gate, into an approximation circuit $\hat{C} \in \mathcal{A}$ from a set of approximation circuits $\mathcal{A}$. This procedure depends on a pair of "compression" function $\mathcal{P}^{\land}, \mathcal{P}^{\lor}: \mathcal{A} \times \mathcal{A} \rightarrow \mathcal{A}$. The error introduced by the conversion, with respect to the distributions $\mathcal{T}^+$ and $\mathcal{T^-}$, will solely depend on $\mathcal{P}$.


\begin{definition}
\label{def:approximation}
    For any monotone circuit $C$ and a pair of compression  functions $\mathcal{P} = (\mathcal{P}^\wedge, \mathcal{P}^\vee)$ with $\mathcal{P}^\wedge, \mathcal{P}^\vee: \mathcal{A} \times \mathcal{A} \rightarrow \mathcal{A}$ define $\hat{C} \in \mathcal{A}$ as the result of the following procedure.
    \begin{enumerate}
        \item \textbf{Input variables}: Let $C=x_i$ be an input variable. Define 
        \begin{equation*}
            \hat{C} = C = x_i.
        \end{equation*}
        \item {$\wedge$ \textbf{-gate}}: Assume $C = \hat{C_0} \wedge \hat{C_1}$. Define
        \begin{equation*}
            \hat{C} = \mathcal{P}^\wedge(\hat{C}_0, \hat{C}_1).
        \end{equation*}
        \item {$\vee$ \textbf{-gate}}: Assume $C = \hat{C_0} \vee \hat{C_1}$. Define 
        \begin{equation*}
            \hat{C} = \mathcal{P}^\vee(\hat{C}_0, \hat{C}_1).
        \end{equation*}
    \end{enumerate}
\end{definition}

Depending on the choice of $\mathcal{P}$, $\hat{C}$ can be very different from $C$. For the sake of analysis, we are interested in the maximum error a single transformation step can introduce on either distribution.

\begin{definition}
\label{def:one-step-errors}
    Let $\mathcal{I}(\mathcal{P}')$ be the image of $\mathcal{P}'$ and define the positive approximation error
    \begin{equation*}
        \zeta^+_{\mathcal{P}} = \max_{\odot \in \{\wedge, \vee\}} \max_{\hat{C_0}, \hat{C_1} \in \mathcal{I}(\mathcal{P}^\odot)} \{ \Pr_{G \in \mathcal{T}^+}[(\hat{C_0} \odot \hat{C_1})=1 \text{ and } \mathcal{P}^{\odot}(\hat{C_0} \odot \hat{C_1})] = 0 \}
    \end{equation*}

    and the negative approximation error 
    \begin{equation*}
        \zeta^-_{\mathcal{P}} = \max_{\odot \in \{\wedge, \vee\}} \max_{\hat{C_0}, \hat{C_1} \in \mathcal{I}(\mathcal{P}^\odot)} \{ \Pr_{G \in \mathcal{T}^-}[(\hat{C_0} \odot \hat{C_1})=0 \text{ and } \mathcal{P}^\odot(\hat{C_0} \odot \hat{C_1})] = 1 \}
    \end{equation*}

\end{definition}


We can also formalize the intuition, laid out earlier, about proving circuit size lower bounds using the relative and absolute approximation errors.


\begin{definition}
    We say that a circuit $C$ distinguishes distributions $\mathcal{T}^+$ and $\mathcal{T}^-$ if $\E_{X \sim \mathcal{T}^+} [C(X)] - \E_{X\sim \mathcal{T}^{-}} [C(X)] \geq 2/3$.
\end{definition} 
We say that a circuit is $\mathcal{P}$-simple if it is in the image of the compression function $\mathcal{P}$. 
\begin{lemma}
\label{lem:extended-approximation}
    If every $\mathcal{P}$-simple circuit $\hat{C}$ satisfies $\E_{X \sim \mathcal{T}^+} [\hat{C}(X)] - \E_{X\sim \mathcal{T}^{-}} [\hat{C}(X)] \leq 1/3$, then every circuit $C$ distinguishing $\mathcal{T}^+$ and $\mathcal{T}^-$ has 
    \begin{equation*}
        \size(C) \geq \Omega(\min(1/\zeta^+_{\mathcal{P}}, 1/\zeta^-_{\mathcal{P}})).
    \end{equation*}
\end{lemma}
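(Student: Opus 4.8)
The plan is to proceed by a hybrid argument that interpolates between the original circuit $C$ and its approximation $\hat{C}$, charging each gate a bounded error. Concretely, let $C$ be a monotone circuit with $g = \size(C)$ gates that distinguishes $\mathcal{T}^+$ and $\mathcal{T}^-$. Enumerate the gates of $C$ in topological order, and for each gate $v$ let $C_v$ be the subcircuit of $C$ rooted at $v$ and let $\hat{C}_v$ be the $\mathcal{P}$-simple circuit produced by the procedure of \Cref{def:approximation} applied to $C_v$ (equivalently, $\hat{C}_v$ is the value assigned to $v$ in the gate-by-gate sweep). The output gate gives $\hat{C} = \hat{C}_{\mathrm{out}}$, which is $\mathcal{P}$-simple, hence by hypothesis $\E_{\mathcal{T}^+}[\hat{C}] - \E_{\mathcal{T}^-}[\hat{C}] \leq 1/3$, while $\E_{\mathcal{T}^+}[C] - \E_{\mathcal{T}^-}[C] \geq 2/3$.

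The key step is to bound, for a single gate $v = v_0 \odot v_1$ with $\odot \in \{\wedge, \vee\}$, the discrepancy between the ``honest composition'' $\hat{C}_{v_0} \odot \hat{C}_{v_1}$ and its compression $\mathcal{P}^\odot(\hat{C}_{v_0}, \hat{C}_{v_1}) = \hat{C}_v$. Since the children $\hat{C}_{v_0}, \hat{C}_{v_1}$ lie in the image of $\mathcal{P}$ (they are themselves outputs of compression functions, or input variables, which by convention/setup we take to be in $\mathcal{A}$), \Cref{def:one-step-errors} directly yields
\begin{align*}
    \Pr_{G \sim \mathcal{T}^+}[(\hat{C}_{v_0}\odot\hat{C}_{v_1})(G)=1 \text{ and } \hat{C}_v(G)=0] &\leq \zeta^+_{\mathcal{P}},\\
    \Pr_{G \sim \mathcal{T}^-}[(\hat{C}_{v_0}\odot\hat{C}_{v_1})(G)=0 \text{ and } \hat{C}_v(G)=1] &\leq \zeta^-_{\mathcal{P}}.
\end{align*}
I would then define, for each gate $v$, the ``partial hybrid'' circuit $H_v$ obtained from $C$ by replacing, for every gate $u$ already processed (i.e. $u \preceq v$ in the topological order), the subcircuit $C_u$ by $\hat{C}_u$, and leaving the rest of $C$ intact. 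Thus $H_{\mathrm{out\text{-}pred}}$ is close to $C$ with only input-level substitutions (no error), and after processing the output gate we reach $\hat{C}$. Moving from $H$ just-before-$v$ to $H$ just-after-$v$ only swaps $\hat{C}_{v_0}\odot\hat{C}_{v_1}$ for $\hat{C}_v$ at one location; by monotonicity of the surrounding circuit, a pointwise change at $v$ from $1$ to $0$ can only flip the overall output from $1$ to $0$ (and symmetrically), so the change in $\E_{\mathcal{T}^+}[\cdot]$ is at least $-\zeta^+_{\mathcal{P}}$ and the change in $\E_{\mathcal{T}^-}[\cdot]$ is at most $+\zeta^-_{\mathcal{P}}$. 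Summing over all $g$ gates and telescoping, $\E_{\mathcal{T}^+}[C] - \E_{\mathcal{T}^+}[\hat{C}] \leq g\,\zeta^+_{\mathcal{P}}$ and $\E_{\mathcal{T}^-}[\hat{C}] - \E_{\mathcal{T}^-}[C] \leq g\,\zeta^-_{\mathcal{P}}$.

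Combining these gives
\begin{equation*}
    \tfrac{2}{3} \leq \E_{\mathcal{T}^+}[C] - \E_{\mathcal{T}^-}[C] \leq \E_{\mathcal{T}^+}[\hat{C}] - \E_{\mathcal{T}^-}[\hat{C}] + g(\zeta^+_{\mathcal{P}} + \zeta^-_{\mathcal{P}}) \leq \tfrac{1}{3} + g(\zeta^+_{\mathcal{P}} + \zeta^-_{\mathcal{P}}),
\end{equation*}
so $g \geq \tfrac{1}{3(\zeta^+_{\mathcal{P}} + \zeta^-_{\mathcal{P}})} \geq \tfrac{1}{6}\min(1/\zeta^+_{\mathcal{P}}, 1/\zeta^-_{\mathcal{P}})$, which is the claimed $\Omega(\min(1/\zeta^+_{\mathcal{P}}, 1/\zeta^-_{\mathcal{P}}))$ bound. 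The main obstacle — and the point that needs to be handled carefully — is the monotonicity argument ensuring that a local perturbation at gate $v$ propagates to the output in only one direction, so that the one-step errors of \Cref{def:one-step-errors} (which are one-sided events) suffice; this requires that all gates above $v$ are $\wedge$/$\vee$ gates (true, since $C$ is monotone) and that we track the sign of the perturbation consistently on each distribution. A secondary subtlety is checking that the children $\hat{C}_{v_0}, \hat{C}_{v_1}$ genuinely lie in $\mathcal{I}(\mathcal{P}^\wedge) \cup \mathcal{I}(\mathcal{P}^\vee) \cup \{\text{input variables}\}$ so that the definition of $\zeta^{\pm}_{\mathcal{P}}$ applies at every gate; this is exactly why the maxima in \Cref{def:one-step-errors} are taken over $\hat{C}_0, \hat{C}_1 \in \mathcal{I}(\mathcal{P}^\odot)$, and one should note the mild base-case wrinkle that input variables must also be covered (either by convention that variables are $\mathcal{P}$-simple, or by a direct check that the first compression applied to them is still within scope).
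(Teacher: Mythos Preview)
Your proof is correct and follows essentially the same approach as the paper: accumulate the one-step errors $\zeta^+_{\mathcal{P}}, \zeta^-_{\mathcal{P}}$ over the $\size(C)$ gates and compare the resulting total error to the $1/3$ gap between the distinguishing advantage of $C$ and the upper bound for any $\mathcal{P}$-simple $\hat{C}$. The paper's own proof is a two-sentence sketch of this same error-accumulation argument; your hybrid/monotonicity write-up simply spells out the details (and correctly flags the minor definitional wrinkle about input variables and the image of $\mathcal{P}^\odot$).
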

\begin{proof}
    The transformation of $C$ into $\hat{C}$ introduces at most $\zeta^+$ error on $\mathcal{T}^+$ and $\zeta^-$ error on $\mathcal{T}^-$ per gate. As $C$ distinguishes the distributions, but $\E_{X \sim \mathcal{T}^+} [\hat{C}(X)] - \E_{X\sim \mathcal{T}^{-}} [\hat{C}(X)] \leq 1/3$, the error of the $\mathcal{P}$-transformation has to match up the absolute error of the $\mathcal{P}$-simple circuit $\hat{C}$, such that the theorem follows.
\end{proof}

\subsection{Proving Clique Lower Bounds}

In the following, we will specialize the definition of the approximation circuits $\mathcal{A}$ and the compression functions $(\mathcal{P}^\wedge, \mathcal{P}^\vee)$ to prove the lower bounds on $\mathcal{D}(\mathcal{T}^-_\alpha, \mathcal{T}^+_\beta)$.

\begin{definition}
    Let $A \subseteq [n]$ be a subset of $n$ vertices. Let $\mathcal{K}_A$ be the clique indicator function on $A$, such that $\mathcal{K}_A(G)=1 \Leftrightarrow K_A \subset G$. For any family of subsets of $[n]$, $\{A_1, \ldots, A_m\}$, define the approximation circuit as 
    \begin{equation*}
        A = \bigvee_{i=1}^{m} \mathcal{K}_{A_i}
    \end{equation*}
    Let $\mathcal{A}$ be the set of all approximation circuits.
\end{definition}

We will often freely switch between the formal definition of an approximator as a boolean circuit $A = \bigvee_{i=1}^{m} \mathcal{K}_{A_i}$ and its representation as a set system over the universe of graph vertices $A = \{A_1, \dots, A_m\} \subseteq 2^{2^{[n]}}$.

\begin{definition}
    \label{def:approximation-gates}
    For an "inner-compression" function $\icomp: \mathcal{A} \to \mathcal{A}$, define $\mathcal{P} = (\mathcal{P}^\wedge, \mathcal{P}^\vee)$ with
    \begin{equation*}
        \mathcal{P}^\wedge(\bigvee_i^u \mathcal{K}_{X_i}, \bigvee_j^v \mathcal{K}_{Y_i}) = \icomp(\bigvee_i^u \bigvee_j^v \mathcal{K}_{X_i \cup Y_j})
    \end{equation*}

    \begin{equation*}
        \mathcal{P}^\vee(\bigvee_i^u \mathcal{K}_{X_i}, \bigvee_j^v \mathcal{K}_{Y_i}) = \icomp(\bigvee_i^u \mathcal{K}_{X_i} \vee \bigvee_j^v \mathcal{K}_{Y_i})
    \end{equation*}
\end{definition}

We observe that for the identity $\icomp = id$, $\mathcal{P}$ introduces no error on the graph distributions $\mathcal{T}^-_\alpha$ and $\mathcal{T}^+_\beta$. The error thus only depends on the choice of $\icomp$. For constructing such $\icomp$, we will use the robust clique sunflowers.

\subsection{Robust Closures}
For a DNF formula $A := \bigvee_{i=1}^m \mathcal{K}_{A_i}$ we define $\cl_{p, \varepsilon}(A)$ to be a formula obtained from $A$ by repeatedly replacing any $(p, \varepsilon)$-robust clique sunflower $A_{i_1}, \ldots A_{i_k}$ by its core $C := \bigcap_{j} A_{i_j}$, as long as there is such a robust clique sunflower, interleaved with removing all sets $A_j$, s.t. $A_i \subset A_j$ for some $A_i$. The $\trim_c(A)$ will be a circuit obtained from $A$ by removing all sets $|A_{i}| > c$.

Finally, we will chose the inner compression function (used in~\Cref{def:approximation-gates}) as 
\begin{equation*}
    \tau(A) := \trim_c(\cl_{p, \varepsilon}(A)).
\end{equation*}

We say that a circuit $A$ is $(p, \varepsilon)$-closed, if there are no $(p, \varepsilon)$-robust sunflowrers among the sets $A_1, \ldots A_m$.
The bound on the number of sets $A_j$ of a given size for an $(p, \varepsilon)$-closed circuit is a direct consequence of the definition of a closed circuit, and the $\RCB$ numbers. 
\begin{fact}
\label{lem:minterms-bound}
    The number $\mathcal{M}_l(A)$ of clique indicators $A_i$  of size $l$ of in a $(p, \epsilon)$-closed circuit $A$ is bounded by $$\mathcal{M}_l(A) < \RCB(\ell, p, \varepsilon).$$
\end{fact}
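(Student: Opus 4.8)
The plan is to argue by contradiction directly from the definitions. Suppose $A = \bigvee_{i=1}^m \mathcal{K}_{A_i}$ is $(p,\varepsilon)$-closed, and let $\mathcal{F}_\ell := \{A_i : |A_i| = \ell\}$ be the subfamily of clique indicators of size exactly $\ell$. This $\mathcal{F}_\ell$ is an $\ell$-uniform set system. If we had $|\mathcal{F}_\ell| \geq \RCB(\ell, p, \varepsilon)$, then by the very definition of $\RCB(\ell, p, \varepsilon)$ (Definition~\ref{def:robust-sunflower-function}), the family $\mathcal{F}_\ell$ would contain a $(p,\varepsilon)$-robust clique sunflower $A_{i_1}, \ldots, A_{i_k}$. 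But then the sets $A_{i_1}, \ldots, A_{i_k}$ are among $A_1, \ldots, A_m$ and form a robust clique sunflower, contradicting the assumption that $A$ is $(p,\varepsilon)$-closed (which by definition means no such sunflower exists among the $A_i$). Hence $\mathcal{M}_\ell(A) = |\mathcal{F}_\ell| < \RCB(\ell, p, \varepsilon)$.

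The only point requiring a little care is the interface between ``robust clique sunflower among a set system'' and ``robust clique sunflower among the clique indicators of $A$'': the $\RCB$ bound is stated for $\ell$-uniform set systems, so we must restrict attention to the sets of size exactly $\ell$ and observe that a robust clique sunflower found inside the uniform subfamily $\mathcal{F}_\ell$ is in particular a robust clique sunflower inside the full family $\{A_1,\ldots,A_m\}$ — this is immediate since being a $(p,\varepsilon)$-robust clique sunflower (Definition~\ref{def:robust-clique-sunflower}) is a property intrinsic to the chosen subfamily and its core, not affected by the presence of other sets. There is no genuine obstacle here; the statement is essentially a restatement of the definition of $\RCB$ applied to the size-$\ell$ slice of a closed circuit, and the proof is a one-line contrapositive. (Implicitly we also use that all the $A_i$ have size at least $3$ so that clique indicators of size $\ell \geq 3$ are the only ones that matter, but this plays no role in the counting bound itself.)
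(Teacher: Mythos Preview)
Your proof is correct and follows exactly the paper's approach: the paper states this as a \emph{Fact} and notes just before it that the bound ``is a direct consequence of the definition of a closed circuit, and the $\RCB$ numbers,'' which is precisely the contrapositive you spell out. The final parenthetical about sizes $\geq 3$ is unnecessary and slightly off-point, but harmless.
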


\subsection{The lower bound}
A crucial property that we will exploit both in the analysis of the approximator as well as in the construction of an explicit algorithm solving $\mathcal{D}(\mathcal{T}^-_\alpha, \mathcal{T}^+_\beta)$ is that the positive test distribution $\mathcal{T}^+_{\beta}$ is "well-spread" in a sense that it is unlikely for a large clique-indicator to accept many instances of $\mathcal{T}^+_{\beta}$.

\begin{lemma}
\label{lem:minterm-accepts}
    For a clique $K_B$ with $|B| = \ell$, $$\mathbb{E}_{G \sim \mathcal{T}^+_{\beta}}[\mathcal{K}_B(G)] < (\beta / n)^\ell.$$
\end{lemma}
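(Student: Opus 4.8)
The plan is to unwind the definition of $\mathcal{T}^+_\beta$ and reduce the claim to a one-line binomial computation. Recall that a sample $G \sim \mathcal{T}^+_\beta$ is obtained by choosing a uniformly random set $S \in \binom{[n]}{\beta}$ and taking $G = K_S$ (with all remaining vertices isolated). First I would observe that the edge set of $G$ is exactly $\binom{S}{2}$, so the clique indicator satisfies $\mathcal{K}_B(G) = 1$ if and only if $K_B \subseteq K_S$, which happens precisely when $B \subseteq S$. Hence $\mathbb{E}_{G \sim \mathcal{T}^+_\beta}[\mathcal{K}_B(G)] = \Pr_S[B \subseteq S]$.

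Next I would evaluate this probability. If $\beta < \ell$ it is $0$ and the bound is immediate, so assume $\beta \geq \ell$. Counting the $\beta$-subsets of $[n]$ that contain the fixed $\ell$-set $B$ gives
\begin{equation*}
\Pr_S[B \subseteq S] = \frac{\binom{n-\ell}{\beta-\ell}}{\binom{n}{\beta}} = \prod_{i=0}^{\ell-1} \frac{\beta - i}{n - i}.
\end{equation*}
Each factor obeys $\frac{\beta-i}{n-i} \le \frac{\beta}{n}$ since $\beta \le n$ (cross-multiplying, this is just $i\beta \le in$), and the inequality is strict for every $i \ge 1$. As the lemma is applied to clique indicators of size $\ell \ge 2$ (for $\ell \le 1$ the indicator $\mathcal{K}_B$ has empty edge set and is identically $1$), at least one factor contributes strictly, and I conclude $\prod_{i=0}^{\ell-1} \frac{\beta-i}{n-i} < (\beta/n)^\ell$.

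I expect no genuine obstacle here: the statement is essentially a restatement of the hypergeometric tail, and the only points requiring a word of care are the degenerate case $\beta < \ell$ and the strictness of the inequality (which forces the harmless convention $\ell \ge 2$). The one conceptual step worth spelling out explicitly is the equivalence $\mathcal{K}_B(G) = 1 \iff B \subseteq S$, which is exactly where the "isolated clique" structure of $\mathcal{T}^+_\beta$ is used, and which is the property the lemma is meant to capture.
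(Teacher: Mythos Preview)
Your proof is correct and follows essentially the same approach as the paper: reduce $\mathcal{K}_B(G)=1$ to the event $B\subseteq S$ for the random $\beta$-set $S$, then bound the resulting binomial ratio by $(\beta/n)^\ell$. If anything, you are slightly more careful than the paper, which writes the ratio as $\binom{n-\beta}{\beta-\ell}/\binom{n}{\beta}$ (a typo for $\binom{n-\ell}{\beta-\ell}$) and concludes with $\leq$ rather than the strict $<$ asserted in the lemma.
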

\begin{proof}
    A graph $G$, sampled from $G \sim \mathcal{T}^+_{\beta}$ is an isolated $\beta$-clique $A$ on $n$ vertices. Thus $K_B(G) = 1$ if and only if $B \leq A$ such that 
    $$Pr_{G \sim \mathcal{T}^+_{\beta}}[\mathcal{K}_B(G) = 1] = Pr_{K_A \sim \binom{n}{\beta}}[B \subseteq A] = \frac{\binom{n-\beta}{\beta - \ell}}{\binom{n}{\beta}} \leq (\beta / n)^\ell.$$
\end{proof}

This, together with a bound on the number of clique indicators of each size~(\Cref{lem:minterms-bound}) allows us to easily show that any $\mathcal{P}$-simple circuit cannot distinguish $\mathcal{T}^+_{\alpha}$ from $\mathcal{T}^-_{\beta}$~---either circuit like that trivially accepts every output (and hence makes a large error on the negative distribution), or it accepts only small fraction of inputs from the positive distribution.

\begin{lemma}
\label{lem:simple-circuits-fails}
    If $\hat{C} \neq 1$ is an approximator obtained by the compression function $\mathcal{P}$, then 
    $$\mathbb{E}_{G \sim \mathcal{T}^+_\beta}[\hat{C}(G)] \leq \sum_{l=2}^c (\beta / n)^l \RCB(l, \epsilon, p).$$
\end{lemma}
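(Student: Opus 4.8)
The plan is to unfold what a $\mathcal{P}$-simple circuit looks like and then bound $\mathbb{E}_{G\sim\mathcal{T}^+_\beta}[\hat C(G)]$ by a union bound over its clique indicators, controlling how many indicators of each size can occur via \Cref{lem:minterms-bound} and the acceptance probability of each via \Cref{lem:minterm-accepts}.

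First I would record the structure of the image of $\mathcal{P}$. Since both $\mathcal{P}^\wedge$ and $\mathcal{P}^\vee$ end by applying the inner compression $\tau(\cdot) = \trim_c(\cl_{p,\varepsilon}(\cdot))$, every $\mathcal{P}$-simple circuit has the form $\hat C = \bigvee_i \mathcal{K}_{A_i}$ where (i) every $|A_i| \le c$, because this is exactly what $\trim_c$ enforces, and (ii) the family $\{A_i\}$ contains no $(p,\varepsilon)$-robust clique sunflower: the closure step $\cl_{p,\varepsilon}$ terminates at a $(p,\varepsilon)$-closed family, and deleting sets in $\trim_c$ cannot create a robust clique sunflower among a subfamily. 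Hence $\hat C$ is $(p,\varepsilon)$-closed and \Cref{lem:minterms-bound} applies: for each $\ell$, the number $\mathcal{M}_\ell(\hat C)$ of sets $A_i$ with $|A_i| = \ell$ satisfies $\mathcal{M}_\ell(\hat C) < \RCB(\ell, p, \varepsilon)$.

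Next I would use the hypothesis $\hat C \neq 1$ to conclude that every $A_i$ has $|A_i| \ge 2$: if $|A_i| \le 1$ then the edge set $K_{A_i}$ is empty, so $\mathcal{K}_{A_i} \equiv 1$ and $\hat C$ would be the constant $1$. Thus $\hat C = \bigvee_{\ell = 2}^{c}\bigvee_{i:\,|A_i| = \ell} \mathcal{K}_{A_i}$ with strictly fewer than $\RCB(\ell,p,\varepsilon)$ disjuncts at each level $\ell \in \{2,\dots,c\}$. Finally I would close with a union bound over the disjuncts together with \Cref{lem:minterm-accepts}:
$$\mathbb{E}_{G\sim\mathcal{T}^+_\beta}[\hat C(G)] \;\le\; \sum_{\ell=2}^{c}\ \sum_{i:\,|A_i|=\ell}\mathbb{E}_{G\sim\mathcal{T}^+_\beta}[\mathcal{K}_{A_i}(G)] \;<\; \sum_{\ell=2}^{c}\mathcal{M}_\ell(\hat C)\,(\beta/n)^\ell \;<\; \sum_{\ell=2}^{c}(\beta/n)^\ell\,\RCB(\ell,p,\varepsilon),$$
which is the claimed bound.

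I do not anticipate a genuine obstacle here; the statement is essentially a bookkeeping corollary of the two preceding lemmas. The only points that need a little care are verifying that $\trim_c$ preserves $(p,\varepsilon)$-closedness (so that \Cref{lem:minterms-bound} is legitimately applicable to $\hat C$ and not just to $\cl_{p,\varepsilon}(\cdot)$ before trimming), and the boundary remark that clique indicators on sets of size at most $1$ are identically $1$, which is precisely what forces all surviving sets to have size $\ge 2$ and lets the outer sum start at $\ell = 2$.
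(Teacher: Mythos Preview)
Your proposal is correct and follows essentially the same approach as the paper: a union bound over the clique indicators, using \Cref{lem:minterms-bound} to bound the number of indicators of each size and \Cref{lem:minterm-accepts} to bound each acceptance probability. If anything, you are slightly more careful than the paper, which does not spell out why trimming preserves $(p,\varepsilon)$-closedness or why the sum begins at $\ell=2$.
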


\begin{proof}
    The proof is essentially a simple application of a union bound. If $G \sim \mathcal{T}^+_\beta$ and $\hat{C}(G) = 1$, then there must exist some term $\mathcal{K}_{A_i}$ of $\hat{C}$ with $K_{A_i} \subset G$. By~\Cref{lem:minterms-bound} there are at most $\mathcal{M}_l(f)$ terms of size $l$ (for all $l \leq c$) and by \Cref{lem:minterm-accepts} for every term of size $l$ the probability that it accepts a positive instance is at most $(\beta/n)^l$. Thus
    $$Pr_{G \sim \mathcal{T}^+_\beta}[\hat{C}(G) = 1] \leq \sum_{l=2}^c (\beta / n)^l \mathcal{M}_l(f) \leq \sum_{l=2}^c (\beta / n)^l \RCB(l, \epsilon, p).$$
\end{proof}

\subsubsection*{Single-step Approximation Errors}

We proceed by giving bounds on the single-step approximation errors $\zeta^+$ and $\zeta^-$, induced by~$\mathcal{P}$.


\begin{lemma}
    \label{lem:positive-error}
    The single-step error introduced on the positive distribution is bounded as follows
    $$\zeta^+_{\mathcal{P}} \leq \sum_{l=c}^{2c} (\beta /n)^l \RCB(l, \epsilon, p).$$
\end{lemma}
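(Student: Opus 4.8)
The plan is to track exactly which graphs from the positive distribution $\mathcal{T}^+_\beta$ can change their value under a single application of the compression function $\mathcal{P}^\odot = \tau \circ (\text{DNF rewriting})$. Recall that $\tau = \trim_c \circ \cl_{p,\varepsilon}$. First I would observe that the DNF-rewriting step (distributing $\land$ over $\lor$, or taking the union of the two lists for $\lor$) does not change the computed function at all, so it introduces no error on either distribution; likewise the closure step $\cl_{p,\varepsilon}$ only ever \emph{adds} accepted inputs (replacing a robust clique sunflower by its core can only make the DNF accept more graphs, and deleting sets $A_j$ that contain some other $A_i$ changes nothing), so it cannot cause the event ``was $1$, becomes $0$'' that defines $\zeta^+_{\mathcal{P}}$. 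Hence the entire positive one-step error comes from $\trim_c$, which deletes all clique indicators $\mathcal{K}_{A_i}$ with $|A_i| > c$.

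Next I would bound the probability, over $G \sim \mathcal{T}^+_\beta$, that $G$ is accepted by some deleted indicator. Since the DNF fed into $\trim_c$ is the output of $\cl_{p,\varepsilon}$, it is $(p,\varepsilon)$-closed, so by \Cref{lem:minterms-bound} it has fewer than $\RCB(\ell,p,\varepsilon)$ clique indicators of each size $\ell$. The sizes that get trimmed are those with $\ell > c$; moreover, because the DNF arose from $\mathcal{P}^\wedge$ applied to two approximators whose indicators already have size at most $c$ (inductively, every approximator in the image of $\mathcal{P}$ has indicators of size $\le c$), any newly-created set $X_i \cup Y_j$ has size at most $2c$, and the closure step only shrinks sets, so every surviving indicator before trimming has size at most $2c$. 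Therefore the trimmed indicators have size $\ell$ ranging over $c+1 \le \ell \le 2c$ (I would write the sum starting at $\ell = c$ to be safe, matching the statement). A union bound over all such indicators, using \Cref{lem:minterm-accepts} (each size-$\ell$ indicator accepts a $\mathcal{T}^+_\beta$ graph with probability less than $(\beta/n)^\ell$) and \Cref{lem:minterms-bound} (at most $\RCB(\ell,p,\varepsilon)$ of them), gives
\[
  \zeta^+_{\mathcal{P}} \;\le\; \sum_{\ell=c}^{2c} (\beta/n)^\ell\, \RCB(\ell,p,\varepsilon),
\]
as claimed. Finally I would note that maximizing over $\odot \in \{\land,\lor\}$ and over $\hat{C}_0,\hat{C}_1 \in \mathcal{I}(\mathcal{P}^\odot)$ (as in \Cref{def:one-step-errors}) does not affect the bound, since the argument above only used that the inputs are approximators with indicator-size at most $c$, which holds throughout the image.

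The main obstacle, and the one place that needs a little care, is justifying that every indicator surviving into $\trim_c$ has size at most $2c$ — i.e., that the range of trimmed sizes is genuinely $[c,2c]$ and not unbounded. This rests on two facts: (i) an inductive invariant that all approximators produced by the procedure have clique indicators of size $\le c$ (true because every one of them is either an input variable — size $1$ — or an output of $\tau$, which trims to size $\le c$), so the $\land$-gate only merges pairs of size-$\le c$ sets into size-$\le 2c$ sets; and (ii) the closure operation $\cl_{p,\varepsilon}$ replaces sunflowers by their cores, which are subsets of the petals, so it never increases any set's size. Neither step can manufacture a set larger than $2c$, which is exactly what makes the sum finite and the bound meaningful; the rest is the routine union bound above.
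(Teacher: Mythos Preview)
Your proof is correct and follows essentially the same approach as the paper: positive error arises only from $\trim_c$, the closed DNF has at most $\RCB(\ell,p,\varepsilon)$ indicators of each size $\ell$ by \Cref{lem:minterms-bound}, and a union bound with \Cref{lem:minterm-accepts} finishes it. In fact you supply more detail than the paper does, explicitly justifying why the trimmed sizes lie in $[c,2c]$ (the paper leaves this implicit); one tiny slip is that an input variable $x_{uv}$ is a clique indicator on a set of size $2$, not $1$, but this is harmless provided $c \ge 2$.
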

 \begin{proof}
    If we have for some $G \sim \mathcal{T}^+_\beta$ that $C(G) = 1$ but $\tau(C)(G) = 0$, then there was some term $\mathcal{K}_{A_i}$ of $\cl_{p,\varepsilon}(C)$ of size larger than $c$ with $\mathcal{K}_A(G) = 1$, which was discarded during the trimming process. 

    We can union bound the probability of this happening, in a similar way as~\Cref{lem:simple-circuits-fails}, using the upper bound $M_\ell(\cl_{p, \varepsilon}(C)) \leq \RCB(\ell, p, \varepsilon)$ (\Cref{lem:minterms-bound}) on the number of terms of a given size $\ell$, and the upper bound on the probability that a given term of size $\ell$ accepts the positive distribution~(\Cref{lem:minterm-accepts}).
\end{proof}

We also notice that for a fixed closure factor, the negative approximation error is bounded by the probability bound given by the robust clique-sunflower.

\begin{lemma}
    \label{lem:negative-error}
    The single-step error introduced on the negative distribution is bounded as 
    $$\zeta^-_{\mathcal{P}} \leq \epsilon n^{2c}.$$
\end{lemma}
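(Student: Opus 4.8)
The plan is to bound $\zeta^-_{\mathcal{P}}$, the single-step error on the negative distribution, by tracking the error introduced by each application of the inner compression function $\tau = \trim_c \circ \cl_{p,\varepsilon}$. First I would note that $\trim_c$ only removes clique indicators, hence can only turn a $1$ into a $0$; since the error on $\mathcal{T}^-$ counts instances where $\mathcal{P}^\odot(\hat C_0 \odot \hat C_1)(G) = 1$ but $(\hat C_0 \odot \hat C_1)(G) = 0$, the trimming step contributes nothing to $\zeta^-$. So the entire negative error comes from the closure operation $\cl_{p,\varepsilon}$, which proceeds by a sequence of elementary steps, each either (a) replacing a $(p,\varepsilon)$-robust clique sunflower $A_{i_1},\dots,A_{i_k}$ by its core $C = \bigcap_j A_{i_j}$, or (b) deleting a set $A_j$ that contains some other set $A_i$. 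Step (b) again only removes indicators so introduces no error on $\mathcal{T}^-$, and step (a) can only \emph{increase} the set of accepted graphs (replacing $\mathcal{K}_{A_{i_j}}$ by the weaker $\mathcal{K}_C$), so the whole process is monotone — $\cl_{p,\varepsilon}(C)(G) \geq C(G)$ pointwise — and we only need to bound the probability that \emph{some} core-replacement step flips the output from $0$ to $1$ on a random $G \sim \mathcal{T}^-_\alpha$.

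Next I would bound the error of a single core-replacement step. If replacing the robust clique sunflower $\{A_{i_j}\}$ by its core $C$ changes the value of the formula at $G$ from $0$ to $1$, then in particular $\mathcal{K}_C(G \cup K_C) = \mathcal{K}_{K_C}(G \cup K_C)$ is satisfied but none of the original $\mathcal{K}_{A_{i_j}}$ were; more to the point, the event is contained in the event that $\mathcal{K}_C$ accepts $G$ while no $A_{i_j}$ does. By the defining property of a $(p,\varepsilon)$-robust clique sunflower (\Cref{def:robust-clique-sunflower}), $\Pr_{G \sim \mathcal{G}_{n,p}}[\exists j,\ K_{A_{i_j}} \subset G \cup K_C] \geq 1 - \varepsilon$, so the probability that $G \cup K_C$ fails to contain any $K_{A_{i_j}}$ — which upper-bounds the probability that this step introduces error — is at most $\varepsilon$. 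Since $\mathcal{T}^-_\alpha$ is exactly $\mathcal{G}_{n,p}$, each elementary core-replacement step introduces error at most $\varepsilon$ on the negative distribution.

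Finally I would bound the number of core-replacement steps and take a union bound. The inputs $\hat C_0, \hat C_1$ to $\mathcal{P}^\odot$ lie in $\mathcal{I}(\tau)$, hence only involve clique indicators on sets of size at most $c$; after forming pairwise unions in $\mathcal{P}^\wedge$ the sets have size at most $2c$, and the closure process never creates a set not contained in one already present, so every clique indicator ever appearing is on a vertex set of size at most $2c$. The number of such subsets of $[n]$ is at most $\sum_{\ell \le 2c}\binom{n}{\ell} \le n^{2c}$ (for $n$ large), and each core-replacement step strictly enlarges the set of accepted graphs by replacing at least one set by a strictly smaller one, so the total number of such steps across the whole closure is at most $n^{2c}$ — actually it is cleanest to say: each step replaces some clique indicators by one on a strictly smaller core, no indicator ever reappears after being removed in a way that could be re-added (the process terminates), and the number of distinct indicators available is $\le n^{2c}$, bounding the number of steps. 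A union bound over these at most $n^{2c}$ steps, each contributing error $\le \varepsilon$, yields
\begin{equation*}
    \zeta^-_{\mathcal{P}} \leq \varepsilon\, n^{2c},
\end{equation*}
as claimed. The main obstacle I anticipate is making the bound on the number of closure steps fully rigorous: one needs a clean monovariant (e.g. the multiset of sizes of indicators decreasing lexicographically, or the fact that once a core $C$ is introduced it is never later removed and re-introduced) to argue the process terminates in at most $n^{2c}$ core-replacements rather than merely terminating; everything else is a direct union bound against the robust-clique-sunflower guarantee.
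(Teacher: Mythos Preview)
Your proposal is correct and follows essentially the same approach as the paper's (very terse) proof. The clean monovariant you are looking for is exactly the one the paper uses: each subset $C \subseteq [n]$ of size at most $2c$ can be introduced as a core at most once, because the core is always a strict subset of the sunflower indicators (and once $C$ is added, all its proper supersets are pruned by step~(b) and can never reappear as cores), giving at most $n^{2c}$ core-replacement steps.
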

\begin{proof}
    Note that while applying $\cl_{p,\varepsilon}(A)$ operation, each set $S \subset [n]$ of size at most $2c$ can be added as a core of some robust clique sunflower at most once, as the core is always a strict subset of the sunflower indicators. Hence, we will repeat the process of replacing a robust sunflower by its core at most $n^{2c}$ times, in each step introducing error at most $\varepsilon$ on the negative distribution (by the definition of robust clique sunflower).
\end{proof}

\subsubsection*{Complexity}

By the results of the last sections, we can proceed to prove a quantified condition on the existence of a monotone circuit lower complexity bound on $\mathcal{D}(\mathcal{T}^+_\beta, \mathcal{T}^-_\alpha)$, depending on the robust-clique sunflower bound $\RCB(l, \epsilon, p)$. The proof will follow by a simple application of~\Cref{lem:extended-approximation} together with bounds on the error introduced in each step of the process, established in the previous section. We recall the statement of the theorem from~\Cref{sec:intro-clique-sunflower-bounds}.

\cliqueboundrcb*

\begin{proof}
This statement follows directly from~\Cref{lem:extended-approximation}. First, note that by~\Cref{lem:simple-circuits-fails} any $\mathcal{P}$-simple circuit either accepts all negative instances, or accepts a positive instance with probability 
\begin{equation*}
    \sum_{2\leq \ell \leq c} (\beta/n)^{\ell} \RCB(\ell, p, \varepsilon) \leq \sum_{2 \leq \ell \leq c} \gamma^\ell \leq O(\gamma^2) \leq o(1).
\end{equation*}
hence a $\mathcal{P}$-simple circuit cannot distinguish it $\mathcal{T}^-_{\alpha}$ from $\mathcal{T}^+_{\beta}$.

The error $\zeta_{\mathcal{P}}^- \leq \varepsilon n^{2c}$ was shown as~\Cref{lem:negative-error}. For the $\zeta_{\mathcal{P}}^+$ we can use~\Cref{lem:positive-error}, to obtain a bound
\begin{equation*}
    \zeta^+_{\mathcal{P}} \leq \sum_{\ell = c+1}^{2c} (\beta/n)^\ell \RCB(\ell, p, \varepsilon) \leq \sum_{\ell=c+1}^{\infty}\gamma^\ell \leq O(\gamma^{c}).
\end{equation*}
With those two bounds, we can now directly apply~\Cref{lem:extended-approximation} to deduce the first part of the theorem. The second part follows by choosing $\varepsilon := n^{-4c}$,  $\gamma := n^{-\delta}$, and simple algebraic manipulations.
\end{proof}

We can now deduce~\Cref{thm:main-lb} directly from this more general statement and the new upper bound on the $\RCB$ (\Cref{cor:our-rcb}).
\begin{proof}[Proof of~\Cref{thm:main-lb}]
By \Cref{cor:our-rcb}, taking $\varepsilon = n^{-4c}$, for $\ell \leq 2c$ we have
\begin{equation*}
    \RCB(\ell, p, \varepsilon)^{1/\ell} \lesssim p^{-\ell} (\log (1/\varepsilon) + \log \ell) \lesssim n^{\frac{4 c}{\alpha - 1}} c \log n,
\end{equation*}
hence if we pick $c := \delta (\alpha-1)/8,$
we will have
\begin{equation*}
    \beta \RCB(\ell, p, \varepsilon)^{1/\ell} \lesssim \alpha \beta n^{\delta/2} \log n.
\end{equation*}
Now, when $\alpha \beta \leq n^{1-\delta} / \log n$, this yields
\begin{equation*}
    \beta \RCB(\ell, p, \varepsilon)^{1/\ell} \lesssim n^{1-\delta/2},
\end{equation*}
and finally, applying~\Cref{thm:clique-bound-rcb}, we get a lower bound of form
\begin{equation*}
    \Omega(n^{\delta c /2}) \geq \Omega(n^{\delta^2 \alpha / 16}) \geq n^{\Omega(\delta^2 \alpha)}.
\end{equation*}
    
\end{proof}

\section{Robust clique numbers are upper bounded by robust sunflower numbers\label{sec:reduction}}
In this section we will show a reduction, proving that any robust sunflower bound implies a corresponding robust-clique-sunflower bound with slightly different parameters. For our convenience, let us recall the statement of the theorem.

\thmrcbcomparison*

The main technical lemma towards \Cref{thm:rcb-comparison} is a comparison principle for a concrete pair of stochastic processes.  The proof of this result has been inspired by~\cite{talagrand93}. Before we state the comparison lemma, let us provide a necessary definition.
\begin{definition} 
\label{def:proper-lifting}
We say that $\phi : \binom{[n]}{k} \to \binom{[n] \times [n]}{k\ell}$ is a \emph{proper lifting} if for every $i \in S$, we have $|\phi(S) \cap (\{i\} \times [n])| = \ell$. 
\end{definition}
Note that since the image of any set under the proper lifting has exactly $k \ell$ elements, and it has exactly $\ell$ elements in each of the $k$ rows corresponding to elements of $S$, we have for every $i\not\in S$, that $\phi(S) \cap (\{i \} \times [n]) = \emptyset$. The canonical examples of proper liftings to have in mind are $\phi(S) := S\times S$ or $\phi(S) := S\times [\ell]$, although we will need a slightly more complicated one in the proof of~\Cref{thm:rcb-comparison}.

The following comparison lemma, shows that among the specific class of the stochastic processes associated with a lifting $\phi$, the smallest process is associated with the lifting $\phi(S) = S\times [\ell]$.
\begin{lemma}
\label{lem:comparison-lemma}
    Let $\mathcal{F}$ be a $k$-uniform family of subsets of $[n]$, and random variables $\{A_{i,j}\}_{i,j \in [n]}, \{\tilde{A}_{i,j}\}_{i\in [n], j \in [\ell]}$ be independent and identically distributed.

    If $\phi$ is any proper lifting (as in~\Cref{def:proper-lifting}), then
    \begin{equation}
    \label{eq:comparison-lemma}
        \E \sup_{S \in \mathcal{F}} \sum_{(i, j) \in S \times [\ell]} A_{i,j} \leq \E \sup_{S \in \mathcal{F}} \sum_{(i,j) \in \phi(S)} \tilde{A}_{i,j}.
    \end{equation}
\end{lemma}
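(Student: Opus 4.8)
The plan is to prove this via a hybrid (path-swapping) argument, transforming the "thin" lifting $\phi_0(S) := S \times [\ell]$ into an arbitrary proper lifting $\phi$ one element at a time, and showing the expected supremum only goes up at each step. The key structural observation is that for a proper lifting, the image $\phi(S)$ is determined row-by-row: for each $i \in S$ we pick an $\ell$-subset $\phi(S)_i \subseteq \{i\} \times [n]$, and $\phi(S) = \bigsqcup_{i \in S} \phi(S)_i$. So I would set up an interpolation that, row index by row index, replaces the fixed columns $[\ell]$ by the columns dictated by $\phi$. Concretely, I'd process the rows $i = 1, 2, \dots, n$ in order; having already "rewired" rows $< i$ to agree with $\phi$, I rewire row $i$. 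Crucially, after fixing all the coordinates that are *not* in row $i$ (i.e. conditioning on all $A_{i',j}$ with $i' \neq i$ and on the $\tilde A$'s), the remaining randomness in row $i$ is an i.i.d. family indexed by columns, and the only thing that changes between consecutive hybrids is *which* $\ell$ columns of row $i$ get summed for each $S$ containing $i$.

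The heart of the matter is then a one-row comparison statement: fix arbitrary "context" values $c_S \in \mathbb{R}$ for $S \in \mathcal{F}$ (these encode the already-summed contributions of the other rows), fix for each $S \ni i$ an $\ell$-subset $T_S \subseteq [n]$ of columns; then
\[
\E \sup_{S \in \mathcal{F}} \Bigl( c_S + \sum_{j \in T_S} A_{i,j}\Bigr)
\]
is minimized (over choices of the $T_S$, subject to them being $\ell$-subsets, and with $c_S$ possibly depending on which rows have been processed — here one needs the contexts to be *consistent* across the two liftings being compared, which the hybrid construction guarantees) when all the $T_S$ for $S$ containing $i$ are taken *equal* to the fixed set $[\ell]$. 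Intuitively, making the column-sets disjoint across different $S$ can only increase the spread of the family $\{c_S + \sum_{j \in T_S} A_{i,j}\}_S$, hence its expected supremum. This is exactly the kind of statement about coordinate-wise contractions / canonical Bernoulli-type processes that the excerpt says is adapted from \cite{talagrand93}; I would prove it by a further one-column hybrid, swapping a single column $j$ out of some $T_S$ and into the "common" configuration, and checking that each such swap is an averaging/contraction operation on the supremum. The technical engine is a pointwise inequality of the form: for the supremum functional, replacing an independent summand shared by a sub-family by per-set independent copies does not decrease the expectation — which follows from Jensen/convexity of $x \mapsto \sup$-type functionals after the right conditioning.

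The main obstacle I anticipate is bookkeeping the conditioning correctly so that the one-row reduction is legitimate. When I rewire row $i$, the contributions of the *earlier* rows (already matching $\phi$) and the *later* rows (still matching $\phi_0$) must both be frozen into the constants $c_S$, and I must make sure that the $A$-variables used in row $i$ in the "before" and "after" hybrids can be coupled to the *same* underlying randomness — since $\phi$ and $\phi_0$ use the same family $\{A_{i,j}\}$, this is fine, but one has to be careful that a given $A_{i,j}$ is used at most once per set $S$ (which holds because within row $i$, $\phi(S)_i$ is a genuine subset, not a multiset). A secondary subtlety is that the right-hand side of \eqref{eq:comparison-lemma} uses the *separate* family $\{\tilde A_{i,j}\}$; so the cleanest route is: (1) show $\E \sup_S \sum_{(i,j)\in S\times[\ell]} A_{i,j} \le \E\sup_S \sum_{(i,j)\in\phi(S)} A_{i,j}$ for any proper lifting via the hybrid argument, using a single copy of the variables throughout; and (2) observe $S \times [\ell]$ only ever touches columns $[\ell]$ and $k$ rows, so $\sum_{(i,j)\in S\times[\ell]} A_{i,j}$ has the same law as $\sum_{(i,j)\in S\times[\ell]} \tilde A_{i,j}$, letting me swap in the $\tilde A$ family on the left-hand side for free. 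Step (1) is where essentially all the work lives; steps (2) and the hybrid wrapper are routine.
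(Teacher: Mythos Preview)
Your row-by-row hybrid is exactly the paper's approach: define intermediate liftings $\phi_t$ that agree with $S\times[\ell]$ on the rows processed so far and with $\phi$ on the rest, condition on all rows other than the current one, and reduce to a one-row comparison. The gap is in that one-row step.

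You state the correct one-row lemma (the expected supremum is minimized when every nonempty $T_S$ equals $[\ell]$), but the proof you sketch does not go through. First, a column-by-column swap toward the common configuration is \emph{not} monotone. Take $\ell=1$, three indices with $T_1=T_2=\{3\}$, $T_3=\{1\}$, contexts $c_1=c_2=0$, $c_3=-100$, and $A_j$ i.i.d.\ Bernoulli$(1/2)$: currently $\E\max(A_3,A_3,-100+A_1)=\E A_3=\tfrac12$, but after swapping $T_1$ from $\{3\}$ to $\{1\}$ one gets $\E\max(A_1,A_3,-100+A_1)=\E\max(A_1,A_3)=\tfrac34$, a strict increase; by symmetry the swap of $T_2$ does the same, so no single swap from this configuration is a decrease even though the endpoint (all $T_S=\{1\}$) again has value $\tfrac12$. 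Second, the Jensen/decoupling fact you invoke (``replacing a shared summand by per-set independent copies does not decrease the expectation'') is true but points the wrong way: it says fully-shared $\le$ fully-decoupled, whereas you need arbitrary $\ge$ fully-shared. Both ``arbitrary'' and ``fully-shared'' sit below ``fully-decoupled'', so that inequality yields no comparison between them.

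The paper's one-row argument is much simpler and bypasses any nested hybrid. The key observation is that once every nonempty $T_S$ equals $[\ell]$, the supremum collapses to $\max\bigl(b_1+\sum_{j\in[\ell]}A_j,\;b_0\bigr)$, where $b_1=\max\{c_S:T_S\ne\emptyset\}$ and $b_0=\max\{c_S:T_S=\emptyset\}$. Now on the ``arbitrary'' side pick the index $S^*$ attaining $b_1$ and couple via a column permutation $\pi$ with $\pi(T_{S^*})=[\ell]$; pointwise
\[
\sup_S\Bigl(c_S+\sum_{j\in T_S}A_j\Bigr)\ \ge\ \max\Bigl(c_{S^*}+\sum_{j\in T_{S^*}}A_j,\ b_0\Bigr),
\]
and under this coupling the right-hand side is exactly $\max\bigl(b_1+\sum_{j\in[\ell]}\tilde A_j,\ b_0\bigr)$. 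So the one-row step is just ``restrict the supremum to the single maximizing index and relabel columns''; no convexity argument and no inner column hybrid are needed.
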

We will prove this lemma later in \Cref{sec:proof-of-comparison-lemma}. Before we do, let us see how it implies~\Cref{thm:rcb-comparison}. More concretely we will show that every $(p^\ell, \varepsilon/\ell^2)$-robust sunflower is also an $(p, \varepsilon)$-robust clique sunflower, and the~\Cref{thm:rcb-comparison} will be a direct corollary.

It is easier to gain the intuition of how to prove a statement of this form from~\Cref{lem:comparison-lemma}, if we consider a special case of a robust sunflower $\mathcal{S}$ with an empty core $C$ --- in this case we would apply~\Cref{lem:comparison-lemma} with proper lifting $\phi(S) := S\times S$. The $i$-th row of a matrix $A$ has sum equal to $\ell$ independently with probability $p^\ell$, hence by the robust sunflower definition, with high probability there exists a square of form $S \times [\ell]$ with sum $\ell^2$ --- this implies that the expected supremum over sums over $S\times [\ell]$ as in the left-hand side of~\eqref{eq:comparison-lemma} is very close to $\ell^2$, and hence also expected supremum over the sums over $S\times S$ (by the aforementioned lemma). We can deduce that with high probability there is a square $S\times S$ in matrix $\tilde{A}$ with sum $\ell^2$ --- a statement corresponding to the fact that some clique $K_S$ is covered by a random $\mathcal{G}_{n,p}$ graph.

The handling of the more general case, where the core $C$ is not necessarily empty, is only slightly more technical.

\begin{lemma}
\label{lem:robust-sunflower-is-clique-sunflower}
    If $\ell$-uniform family $\mathcal{S} \subset 2^{[n]}$ is a $(p^\ell, \varepsilon/\ell^2)$-robust sunflower, it is also a $(p, \varepsilon)$-robust clique sunflower.
\end{lemma}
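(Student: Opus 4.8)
The plan is to reduce the "clique-covering" event to a "set-covering" event via the comparison principle of Lemma~\ref{lem:comparison-lemma}, by choosing an appropriate proper lifting that builds the clique structure into the index set. Write $C = \bigcap_{S \in \mathcal{S}} S$ for the common core, and let $\mathcal{F} := \{S \setminus C : S \in \mathcal{S}\}$; since $|S| = \ell$ and $C \subseteq S$, each member of $\mathcal{F}$ has some common size $k = \ell - |C|$, so $\mathcal{F}$ is a $k$-uniform family on the ground set $[n] \setminus C$ (which we may relabel as a subset of $[n]$). The event that a random $\mathcal{G}_{n,p}$ graph $G$ satisfies $K_S \subseteq G \cup K_C$ for some $S \in \mathcal{S}$ translates, after fixing which edges inside $C$ and between $C$ and the rest are ``free,'' into the event that for some $S' \in \mathcal{F}$ all edges of $\binom{S'}{2}$ and all edges between $S'$ and $C$ are present in $G$. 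Encoding each vertex $i \in S'$ by the $\ell$ ``slots'' consisting of its edges to the $k$ vertices of $S'$ (minus itself, plus a diagonal bookkeeping slot) and its $|C|$ edges to $C$, we obtain exactly a proper lifting $\phi : \binom{[n]\setminus C}{k} \to \binom{([n]\setminus C)\times[n]}{k\ell}$ in the sense of Definition~\ref{def:proper-lifting}: each row $i$ of $\phi(S')$ has exactly $\ell$ entries. One must be slightly careful that $\phi$ is well-defined as a set (the edge $\{i,j\}$ inside $S'$ should be charged to only one of the two rows); this is handled by an arbitrary fixed linear order on $[n]$, charging $\{i,j\}$ to $\min(i,j)$ and padding with diagonal slots $\{i\}\times\{i\}$, so that each row still gets exactly $\ell$ slots.

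Next I set up the two stochastic processes. Let $\{A_{i,j}\}$ be i.i.d.\ Bernoulli$(p)$ indexed by pairs, and $\{\tilde A_{i,j}\}_{i,j}$ an i.i.d.\ copy; for the left side of~\eqref{eq:comparison-lemma} use the lifting $S' \mapsto S' \times [\ell]$ and for the right side the lifting $\phi$ above. Lemma~\ref{lem:comparison-lemma} gives
\begin{equation*}
    \E \sup_{S' \in \mathcal{F}} \sum_{(i,j)\in S'\times[\ell]} A_{i,j} \;\le\; \E \sup_{S' \in \mathcal{F}} \sum_{(i,j)\in \phi(S')} \tilde A_{i,j}.
\end{equation*}
On the left, row $i$ contributes its full $\ell$ only when $A_{i,1}=\dots=A_{i,\ell}=1$, an event of probability $p^\ell$ independently across $i$; so the left-hand supremum equals $k$ exactly when the random set $W := \{ i : \sum_j A_{i,j} = \ell\}$ (each $i$ included independently with probability $p^\ell$) contains some $S' \in \mathcal{F}$, and otherwise is at most $k-1$. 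Since $\mathcal{S}$ is a $(p^\ell, \varepsilon/\ell^2)$-robust sunflower — and covering $S' \subseteq W$ is precisely the event $S \subseteq W \cup C$ for the corresponding $S \in \mathcal{S}$ — we get $\Pr[\,\exists S'\in\mathcal F:\ S'\subseteq W\,] \ge 1 - \varepsilon/\ell^2$, hence the left side is at least $k(1-\varepsilon/\ell^2) - (k-1) = 1 - k\varepsilon/\ell^2 \ge 1 - \varepsilon/\ell$ (using $k \le \ell$). Therefore the right-hand expected supremum is also at least $1 - \varepsilon/\ell$.

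The final step converts this lower bound on an expected supremum back into a high-probability covering statement. The supremum on the right is an integer between $0$ and $k\ell \le \ell^2$, and it equals $k\ell$ exactly when some $S' \in \mathcal{F}$ has all $k\ell$ of its lifted slots set to $1$ — which, by construction of $\phi$, is exactly the event that the corresponding clique $K_S$ is contained in $G \cup K_C$ (the edges inside $S'$, from $S'$ to $C$, and inside $C$ — the last being free). Let $q$ be the probability this happens. If the supremum is not $k\ell$ it is at most $k\ell - 1$, so
\begin{equation*}
    1 - \varepsilon/\ell \;\le\; \E\sup \;\le\; q\cdot k\ell + (1-q)(k\ell - 1) \;=\; k\ell - (1-q),
\end{equation*}
giving $1 - q \le k\ell - 1 + \varepsilon/\ell$, which is vacuous — so a direct first-moment argument on the raw supremum is too lossy. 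The fix, and the place requiring the most care, is to instead run the comparison on the \emph{normalized deficiency}: apply Lemma~\ref{lem:comparison-lemma} after an affine rescaling, or equivalently track $\sup_{S'} (k - \tfrac1\ell \sum_{(i,j)\in S'\times[\ell]} A_{i,j})$ versus the analogous quantity for $\phi$ — note each row sum is in $\{0,\dots,\ell\}$ so dividing by $\ell$ makes the per-row contribution lie in $[0,1]$ and, crucially, on the $\phi$ side a row of $\phi(S')$ fails to be full as soon as a \emph{single} slot is missing. I expect the clean route is: the event ``$K_S \not\subseteq G\cup K_C$ for all $S$'' forces, for every $S'\in\mathcal F$, at least one missing slot in $\phi(S')$, so $\sup_{S'}\sum_{\phi(S')}\tilde A \le k\ell-1$ deterministically on that bad event; combining with the Markov-type bound from the rescaled comparison and a union bound over the $\ell^2$ possible ``slot counts'' — this is where the $\ell^2$ slack in the robust-sunflower parameter is consumed — yields $q \ge 1 - \varepsilon$. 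Then Lemma~\ref{lem:robust-sunflower-is-clique-sunflower} follows, and Theorem~\ref{thm:rcb-comparison} is the immediate corollary by taking $\mathcal{S}$ of minimum size $\RB(\ell, p^\ell, \varepsilon/\ell^2)$.
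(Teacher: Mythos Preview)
Your high-level strategy — reduce to the comparison lemma with a suitable proper lifting — is exactly right, and matches the paper. But the execution contains a computational slip that makes you think the direct argument fails when in fact it goes through cleanly.

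The supremum $\sup_{S' \in \mathcal{F}} \sum_{(i,j)\in S'\times[\ell]} A_{i,j}$ takes values in $\{0,1,\dots,k\ell\}$, not $\{0,\dots,k\}$; it equals its maximum $k\ell$ precisely when some $S' \subseteq W$. Since this happens with probability at least $1 - \varepsilon/\ell^2 \ge 1 - \varepsilon/(k\ell)$, the left-hand expectation is at least $(1 - \varepsilon/(k\ell))\cdot k\ell = k\ell - \varepsilon$, not merely $1 - \varepsilon/\ell$. On the right-hand side, letting $q$ be the probability that some $\phi(S')$ is entirely filled, one has $\E\sup \le q\cdot k\ell + (1-q)(k\ell - 1) = k\ell - (1-q)$. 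Combining yields $1 - q \le \varepsilon$ directly. There is no need for a ``normalized deficiency,'' an affine rescaling, or any union bound over slot counts; that whole final paragraph is chasing a phantom created by writing $k$ where $k\ell$ belongs.

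Separately, your lifting is more delicate than necessary and, as written, not obviously well-defined: charging edge $\{i,j\}$ to $\min(i,j)$ gives the $r$-th smallest vertex of $S'$ exactly $k-r$ within-$S'$ edges, so the row sizes are not constant before padding, and the padding scheme would have to depend on the position of $i$ within $S'$ (which would make $\phi$ not a function of $S'$ alone in a clean way). The paper avoids this entirely by taking $\phi(S') := S' \times (S' \cup C)$: each row $i \in S'$ then has exactly $|S' \cup C| = k + (\ell - k) = \ell$ entries, so it is a proper lifting with no bookkeeping. The apparent ``double counting'' of an edge $\{i,j\}$ inside $S'$ (appearing as both $(i,j)$ and $(j,i)$) is harmless, since $\tilde A_{i,j}$ and $\tilde A_{j,i}$ are independent Bernoulli$(p)$ variables and demanding both be $1$ only makes the event $\{\sum_{\phi(S')} \tilde A = k\ell\}$ stronger than $\{K_{S'\cup C} \subseteq G \cup K_C\}$ — which is the direction you need.
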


\begin{proof}
Consider a matrix $A \in \{0, 1\}^{n \times \ell}$, and $\tilde{A} \in \{0, 1\}^{n \times n}$, with independent entries, each equal to $1$ with probability $p$. We can treat the part of the matrix $\tilde{A}$  below the diagonal (i.e. $\tilde{A}_{i,j}$ for $j < i$), as determining the adjacency matrix for a random graph $G$ (sampled according to the Erdős–Rényi distribution $\mathcal{G}_{n,p}$).

Let us assume without loss of generality that $C = [\ell-k]$ for some $k$. We can look at a $k$-uniform family $\mathcal{S}' := \{S \setminus C : S \in \mathcal{S}\}$. Consider also the set $W$ of all indices such that the corresponding row in the matrix $A$ is all-one: $W := \{i : \sum_{j\leq \ell} A_{i,j} = \ell \}$. Note that each element in $W$ is included independently with probability $p^\ell$. Since $\mathcal{S}$ is a robust sunflower with the core $C$, directly by definition of robust sunflower with probability at least $1-\varepsilon$, there is a set $S \in \mathcal{S}'$ covered by $W$, i.e. 
\begin{equation*}
    \sum_{i,j \in S \times [\ell]} A_{ij} = \ell k.
\end{equation*}

This means that with probability $1-\varepsilon/\ell^2 \geq 1-\varepsilon/(\ell k)$, the value of
\begin{equation*}
    \sup_{S \in \mathcal{S}'} \sum_{i,j \in S \times [\ell]} A_{ij} = \ell k,
\end{equation*}
and therefore
\begin{equation*}
    \E \sup_{i,j \in S \times [\ell]} A_{ij}\geq (1-\varepsilon/\ell k)\ell l = \ell k - \varepsilon.
\end{equation*}

We will now apply~\Cref{lem:comparison-lemma}
 to a family $\mathcal{S}'$ together with a lifting function $\phi(S) := S \times ([n-k] \cup S)$ to deduce
\begin{equation}
\label{eq:two}
    \E \sup_{S \in \mathcal{S'}} \sum_{i,j \in \phi(S)} \tilde{A}_{ij}\geq  \ell k - \varepsilon.
\end{equation}
On the other hand, denoting by $1 - p_0$ the probability that there exists $S \in \mathcal{S'}$, s.t. all entries of $\tilde{A}_{ij}$ over $(i,j) \in \phi(S)$ are $1$ we have
\begin{equation}
\label{eq:three}
 \E \sup_{S \in \mathcal{S'}} \sum_{i,j \in \phi(S)} \tilde{A}_{ij} \leq (1 - p_0) \ell k + p_0 (\ell k - 1) = \ell k - p_0,    
\end{equation}
and combining \eqref{eq:two} and \eqref{eq:three}, we get $p_0 \leq \varepsilon$.

Finally, as discussed above, if we consider a graph $G$ on $[n]$, where the edge $\{i,j\} \in G$ for $j < i$ if $\tilde{A}_{i,j} = 1$, this graph has exactly the distribution of $\mathcal{G}_{n,p}$. Moreover, when there exists $S \in \mathcal{S}'$, s.t. all entries $\tilde{A}_{i,j} = 1$ for $i \in S, j \in S \cup C$, then in particular the clique $K_{S \cup C}$ is contained in the graph $G \cup K_C$. Indeed, all edges within $C$ are already contained in $K_C$, and $\tilde{A}_{i,j} = 1$ for $i \in S, j \in S \cup C$ implies that all edges between $S$ and $S \cup C$ are present in $G$.

Probability of this happening is $1-p_0 \geq 1-\varepsilon$, as desired, finishing the proof that $\mathcal{S}$ is indeed a robust-clique-sunflower.
\end{proof}
\begin{proof}[Proof of~\Cref{thm:rcb-comparison}]
    This theorem follows as a direct corollary of~\Cref{lem:robust-sunflower-is-clique-sunflower} and relevant definitions.

    Indeed, every $\ell$-uniform family of size at least $RB(p^\ell, \varepsilon/\ell^2)$ contains a $(p^\ell, \varepsilon/\ell^2)$-robust sunflower, which by~\Cref{lem:robust-sunflower-is-clique-sunflower} is  $(p, \varepsilon)$-robust clique sunflower.
\end{proof}

\subsection{Proof of~\texorpdfstring{\Cref{lem:comparison-lemma}}{Lemma 3.2} \label{sec:proof-of-comparison-lemma}}

We will prove this statement by induction. Specifically, let us define $B_t := [t] \times [n]$, and $\bar{B}_t := ([n] \setminus [t]) \times [n]$ (so that $B_t \cup \bar{B}_t$ forms a partition of $[n]\times[n]$ into the first $t$ rows, and the remaining $n-t$ rows).

Take $\phi_t : \binom{[n]}{k} \to \binom{[n]\times [n]}{\ell k}$ defined as $\phi_t(S) := ((S \times [\ell]) \cap B_t) \cup (\phi(S) \cap \bar{B}_t)$ --- i.e. on the first $t$ rows $\phi_t$ agrees with $S \times [\ell]$, and on the remaining $n-t$ rows it agrees with $\phi(S)$. Note that $\phi_t(S)$ has always $\ell k$ elements, and moreover $\phi_0(S) = S\times [l]$, and $\phi_{n}(S) =\phi(S)$. As such it is enough to show that for any $t$ we have
\begin{equation}
\label{eqn:induction-step}
        \E \sup_{S \in \mathcal{F}} \sum_{(i, j) \in \phi_{t}(S)} A_{i,j} \leq \E \sup_{S \in \mathcal{F}} \sum_{(i,j) \in \phi_{t+1}(S)} \tilde{A}_{i,j},
\end{equation}
where $\{A_{ij}\}$ and $\{\tilde{A}_{ij}\}$ are two collections of independent and identically distributed random variables (with the same distribution).

In order to prove this statement, we will consider a coupling where $A_{i,j} = \tilde{A}_{i,j}$ for all $i \not=t+1$, and let us condition on all those variables. 

For any given $S$, we can decompose:
\begin{equation*}
    \sum_{(i,j) \in \phi_{t}(S)} A_{i,j} = \sum_{(i,j)\in \phi_{t}(S), i\not={t + 1}} A_{i,j} + \sum_{j : (t+1, j) \in \phi_{t}(S)} A_{t+1, j}, 
\end{equation*}
and similarly for $\phi_t$. Note that when $i\not=t+1$ the pair $(i,j) \in \phi_t(S)$ if and only if $(i, j) \in \phi_{t+1}(S)$, so if we denote
\begin{equation*}
    a(S) := \sum_{(i,j) \in \phi_t(S) i\not=t} A_{i,j},
\end{equation*}
and by $p_0(S) := \{j : (t, j) \in \phi_{t}(S)\}$, and similarly $p_1(S) := \{ j : (t, j) \in \phi_{t+1}(S)\}$, we have
\begin{align*}
    \sum_{(i,j) \in \phi_t(S)} A_{i,j} = a(S) + \sum_{j \in p_0(S)} A_{t, j},
\end{align*}
and
\begin{align*}
    \sum_{(i,j) \in \phi_{t+1}(S)} A_{i,j} = a(S) + \sum_{j \in p_1(S)} A_{t, j}.
\end{align*}

Finally, notice that if $t \in S$, $|p_0(S)| = \ell$ and $p_1(S) = [\ell]$, whereas if $t \not\in S$, we have $p_0(S) = p_1(S) = \emptyset$. As such, to finish the induction, we only need to show the following claim.
\begin{claim}
    Consider a finite sequence of pairs $(S_1, a_1), (S_2, a_2), \ldots (S_m, a_m)$, where $S_i \subset [n]$, $a_i \in \bR$ and $|S_i|$ is either $\ell$ or $0$.

    Let $\mathcal{D}$ be a distribution over $\mathbb{R}$, and let $A_1, \ldots A_n$ and $\tilde{A}_1, \ldots \tilde{A}_n$ be two sequences of independent random variables distributed according to $\mathcal{D}$. Moreover, let $b_0 = \max \{ a_i : S_i = \emptyset \}$ and $b_1 = \max \{a_i : S_i \not= \emptyset\}$. Then
    \begin{equation*}
        \E \sup_{i \leq m}(a_i + \sum_{j \in S_i} A_j) \geq \E \max(b_1 + \sum_{i \in [\ell]} \tilde{A}_i, b_0).
    \end{equation*}
\end{claim}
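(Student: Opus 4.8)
The plan is to prove the claim by a direct estimate: bound the left-hand supremum below by a single term carrying the best offset $b_1$ together with the constant floor $b_0$ coming from the empty-set indices, and then pass to the right-hand side using nothing more than exchangeability of i.i.d.\ variables. No further coupling is needed --- all of that work went into reducing \Cref{lem:comparison-lemma} to this claim.

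First I would split the supremum according to whether $S_i$ is empty. Since $\sum_{j\in S_i}A_j=0$ when $S_i=\emptyset$, we have pointwise
\[
  \sup_{i\le m}\Bigl(a_i+\sum_{j\in S_i}A_j\Bigr)=\max\Bigl(b_0,\ \sup_{i:\,S_i\neq\emptyset}\bigl(a_i+\sum_{j\in S_i}A_j\bigr)\Bigr),
\]
with the convention $\max\emptyset=-\infty$, which also takes care of the degenerate cases in which one of the two index classes is empty (there the claim is immediate). Next, picking any index $i^\star$ with $S_{i^\star}\neq\emptyset$ and $a_{i^\star}=b_1$ and discarding all the other nonempty terms gives the pointwise bound
\[
  \sup_{i\le m}\Bigl(a_i+\sum_{j\in S_i}A_j\Bigr)\ \ge\ \max\Bigl(b_0,\ b_1+\sum_{j\in S_{i^\star}}A_j\Bigr).
\]
Taking expectations, it only remains to note that $|S_{i^\star}|=\ell$ and the $A_j$ are i.i.d.\ with law $\mathcal D$, so $\sum_{j\in S_{i^\star}}A_j$ has the same distribution as $\sum_{i\in[\ell]}\tilde A_i$; since $x\mapsto\max(b_0,b_1+x)$ is a fixed measurable function,
\[
  \E\max\Bigl(b_0,\ b_1+\sum_{j\in S_{i^\star}}A_j\Bigr)=\E\max\Bigl(b_0,\ b_1+\sum_{i\in[\ell]}\tilde A_i\Bigr),
\]
and chaining the three displays yields the claim. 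Integrability is a non-issue in our application, where $\mathcal D$ is the Bernoulli$(p)$ law and every quantity in sight is bounded by $\ell$.

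I do not expect a genuine obstacle in this last step: the entire content of \Cref{lem:comparison-lemma} is carried by the row-by-row interpolation from $\phi_t$ to $\phi_{t+1}$ that produced the claim, and the claim itself merely records that a finite supremum of offset sums dominates its best single summand --- a sum of $\ell$ i.i.d.\ copies of $\mathcal D$ --- sitting above the constant $b_0$. The only places where one must be slightly careful are the $\max\emptyset=-\infty$ bookkeeping and making sure the right pair of i.i.d.\ families is substituted for $\{A_j\}$ and $\{\tilde A_j\}$ when the claim is plugged back into \eqref{eqn:induction-step}; after that the induction on $t$ closes and \Cref{lem:comparison-lemma} follows.
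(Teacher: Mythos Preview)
Your proof is correct and follows essentially the same approach as the paper: both arguments pick the index $i^\star$ (the paper calls it $i_1$) achieving $b_1$, drop all other nonempty-$S_i$ terms to obtain the pointwise lower bound $\max(b_0,\,b_1+\sum_{j\in S_{i^\star}}A_j)$, and then identify $\sum_{j\in S_{i^\star}}A_j$ with $\sum_{i\in[\ell]}\tilde A_i$. The only cosmetic difference is that the paper makes this last identification via an explicit permutation coupling $A_i=\tilde A_{\pi(i)}$, while you invoke equality in distribution directly; your version is marginally cleaner.
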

\begin{proof}
If all sets $S_i$ are empty, the statement is trivial. Let $i_0$ be an index s.t. $S_{i_0} = \emptyset$ and $a_{i_0} = b_0$, and similarly let $i_1$ be an index such that $S_{i_1} \not= \emptyset$, and $a_{i_1} = b_1$. Consider any permutation $\pi : [n] \to [n]$, s.t. $\pi(S_{i_0}) = [\ell]$. We can extend the sequence $\tilde{A}_i$ to $n$ variables $\{\tilde{A}_{i}\}_{i \leq n}$, and consider a coupling between $A$ and $\tilde{A}$, given by $A_{i} := \tilde{A}_{\pi(i)}$. Clearly, under this coupling both marginal distributions of $\{A_i\}_{i \leq n}$ and $\{\tilde{A}_i\}_{i \leq \ell}$ are the right ones, all we need to show is that for any given realization of the joint process we have
\begin{equation*}
    \sup_{i\leq m} (a_i + \sum_{j \in S_i} A_j) \geq \max(b_1 + \sum_{i \leq l} \tilde{A}_i, b_0).
\end{equation*}
Indeed, by construction, we have
\begin{align*}
    \sup_{i\leq m} (a_i + \sum_{j \in S_i} A_j) & = \sup_{i \leq m} (a_i + \sum_{j \in S_i} \tilde{A}_{\pi(j)}) \\
    & \geq \sup_{i \in \{i_0, i_1\} } (a_i + \sum_{j \in S_i} \tilde{A}_{\pi(j)}) = \max(b_1 + \sum_{i \leq l} \tilde{A}_i, b_0).
\end{align*}
\end{proof}
This lemma completes the proof of the inequality~\eqref{eqn:induction-step}, and we can conclude the proof of~\Cref{lem:comparison-lemma}, by chaining the inequalities~\eqref{eqn:induction-step} across all $t$, since $\phi_0(S) = S \times [\ell]$ and $\phi_n(S) = \phi(S)$.
\section{The upper bound}

We now turn towards the upper monotone complexity bound for $\mathcal{D}(\mathcal{T}^-_\alpha, \mathcal{T}^+_\beta)$ where we will construct an explicit circuit $\mathcal{C}$ for distinguishing  $\mathcal{T}^+_\beta$ and $\mathcal{T}^-_\alpha$.

A useful subroutine that can be implemented using monotone boolean circuits is the ability to "sort" the input in polynomial size, as a sorting network on a boolean input can be easily used to construct a monotone boolean sorting circuit. \footnote{For this, note that a sorting network essentially is a sequence of binary operations on the input, swapping two inputs if they are out of order. A single swap of two booleans $x$, $y$ can be trivially implemented by an $AND = min\{x, y\}$ and an $OR = max\{x, y\}$ gate.} This also implies that they can implement the threshold function $T_\tau$ which accepts exactly if are at least $\tau$ input variables set to $1$, by just wiring the output to the $\tau$-th output of the sorting network.

\subsubsection*{Properties of $\mathcal{T}^+_\beta$ and $\mathcal{T}^-_\alpha$}


$\mathcal{T}^+_\beta$ and $\mathcal{T}^-_\alpha$ have an important distinctive property that we will use for constructing $\mathcal{C}_n$.
\begin{lemma}
\label{lem:positive-and-negative-prob}
    For a clique indicator $\mathcal{K}_A$ with $|A| = l$, the probability of clique inclusion follows
    $$Pr_{G \sim \mathcal{T}^+_\beta}[\mathcal{K}_A(G) = 1] = (\beta/n)^l$$
    and 
    $$Pr_{G \sim \mathcal{T}^-_\alpha}[\mathcal{K}_A(G) = 1] = p^{\binom{l}{2}} = (n^{-2/\alpha - 1})^{\binom{l}{2}}$$
\end{lemma}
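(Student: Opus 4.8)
The plan is to read each probability straight off the definition of the corresponding distribution, handling the two cases separately since they concern different random objects.

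\medskip
\noindent\textbf{Negative case.} Recall that $\mathcal{K}_A(G)=1$ exactly when $K_A\subseteq G$, i.e. when all $\binom{|A|}{2}=\binom{\ell}{2}$ edges spanned by $A$ are present in $G$. Under $\mathcal{T}^-_\alpha=\mathcal{G}_{n,p}$ every potential edge is included independently with probability $p$, so the event $K_A\subseteq G$ is an intersection of $\binom{\ell}{2}$ independent probability-$p$ events; multiplying gives $p^{\binom{\ell}{2}}$, and substituting $p=n^{-2/(\alpha-1)}$ yields the stated closed form. No further argument is needed here.

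\medskip
\noindent\textbf{Positive case.} A sample $G\sim\mathcal{T}^+_\beta$ is the clique $K_B$ on a uniformly random $\beta$-element set $B\subseteq[n]$, with all remaining vertices isolated; hence $\mathcal{K}_A(G)=1$ iff $A\subseteq B$ (if some vertex of $A$ lies outside $B$, an edge of $K_A$ is missing). Since $B$ is uniform over $\binom{[n]}{\beta}$, the number of admissible $B\supseteq A$ is $\binom{n-\ell}{\beta-\ell}$, so $\Pr[A\subseteq B]=\binom{n-\ell}{\beta-\ell}/\binom{n}{\beta}=\prod_{i=0}^{\ell-1}\frac{\beta-i}{n-i}$. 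Each factor is at most $\beta/n$ (as $\beta\le n$), so this product is at most $(\beta/n)^\ell$, matching \Cref{lem:minterm-accepts}; when $\ell$ is small relative to $\beta$ it is also $(1-o(1))(\beta/n)^\ell$, which is the sense in which the displayed equality is to be understood.

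\medskip
There is no real obstacle: both computations are one-line consequences of independence of edges (for $\mathcal{T}^-$) and of uniform sampling of the clique's vertex set (for $\mathcal{T}^+$). The only delicate point is that the positive-side quantity is a falling-factorial ratio rather than literally $(\beta/n)^\ell$; since the downstream union-bound arguments in the upper-bound construction only use the bound $\Pr\le(\beta/n)^\ell$ together with a matching lower bound of the same order, I would phrase the lemma with that reading and invoke \Cref{lem:minterm-accepts} for the upper direction.
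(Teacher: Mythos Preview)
Your proof is correct and follows essentially the same approach as the paper: the paper also refers back to \Cref{lem:minterm-accepts} for the positive case and invokes edge-independence in $\mathcal{G}_{n,p}$ for the negative case. You are in fact more careful than the paper in flagging that the positive-side ``equality'' is literally the falling-factorial ratio $\prod_{i=0}^{\ell-1}(\beta-i)/(n-i)\le(\beta/n)^\ell$, which is exactly how the paper uses it downstream.
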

\begin{proof}
The first equation has already been proven in \Cref{lem:minterm-accepts}. The second equation follows trivially from the definition of $\mathcal{G}_{n, p}$ --- each of the $\binom{l}{2}$ edges of $A$ is present in $G$ independently with probability $p$.
\end{proof}

The interesting property is that, for the right choice of parameters, the "clique-probability" $K_A$ is noticeably larger on $\mathcal{T}^+_\beta$ than on $\mathcal{T}^-_\alpha$. Taking for instance $p := \Pr_{G\sim\mathcal{T}_\alpha^-}(\mathcal{K}_A(G) = 1)$ and $q :=\Pr_{G\sim\mathcal{T}_\beta^+}(\mathcal{K}_A(G) = 1)$, if we are able to pick the size of the clique $l$ such that $q > 10p$, we will be able to construct a circuit of size $\Omega(1/q)$ that accepts $\mathcal{T}^+_\beta$ and rejects $\mathcal{T}^-_\alpha$, by looking at randomly placed $m = \Omega(1/q)$ clique indicators, accepting the input if at least $9 m p $ of those clique indicators are accepting on $G$.


\subsubsection*{An explicit circuit}
The probabilistic \footnote{ We construct an explicit probabilistic circuit here, however, the easy direction in the Yaos Principle \cite{yao1977probabilistic} also directly implies a deterministic upper bound for the distributional problem.} monotone circuit $\mathcal{C}_n$ for an $n$-boolean input is defined by the following construction. For parameters $m, l, \tau$
\begin{enumerate}
    \item Choose $m$ $l$-element subsets $A_1, \dots, A_m \subseteq [n]$ independent uniformly at random and connect the associated clique indicators $K_{A_1}, \dots, K_{A_m}$ to the input 
    \item Connect the output of the clique indicators to the threshold function $T_\tau^m$
    \item Connect the output of $T_\tau^m$ to the output of $\mathcal{C}_n$
\end{enumerate}

The size of the entire circuit like that is $O(m\log m)$. We would like to see how to chose $m$ and $l$ in order for the circuit to reject $\mathcal{T}^-_{\alpha}$ and accept $\mathcal{T}^+_{\beta}$ with probability $3/4$.

We will use the following well-known fact for the analysis of the above circuit construction.


\begin{fact}
\label{fact:distinguish}
Let $X_1, \ldots X_m$ be Bernoulli random variables with $\E X_i \leq p$ (not necessarily independent). Then $\E \sum X_i \leq pm$, and by Markov Inequality, $\Pr(\sum X_i > 4 pm) \leq 1/4$.

On the other hand, if $X'_1, \ldots X'_m$ are independent  Bernoulli random variables with $\E X'_i \geq 5p$, then  as soon as $m \gtrsim 1/p\delta$, Chebyshev inequality implies $\Pr(\sum X'_i \leq 4 pm) \leq O(\delta^2).$
\end{fact}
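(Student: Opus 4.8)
The two halves of the statement are logically independent, and each is a single application of a standard tail bound; the plan is just to run them carefully enough that the constants match. For the first half, linearity of expectation gives $\E \sum_i X_i = \sum_i \E X_i \le pm$ with no independence needed (each $\E X_i = \Pr(X_i = 1) \le p$), and since $\sum_i X_i$ is a nonnegative random variable, Markov's inequality immediately yields $\Pr(\sum_i X_i > 4pm) \le \E[\sum_i X_i]/(4pm) \le 1/4$. That is the entire argument for the first assertion.

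For the second assertion I would use Chebyshev's inequality. Set $S := \sum_i X'_i$ and $\mu := \E S = \sum_i \E X'_i \ge 5pm$. The only computation required is a variance bound: by independence, $\mathrm{Var}(S) = \sum_i \mathrm{Var}(X'_i) = \sum_i \E X'_i(1-\E X'_i) \le \sum_i \E X'_i = \mu$, using that a Bernoulli variable of mean $q$ has variance $q(1-q) \le q$; this is the single place independence is invoked. Since $\mu \ge 5pm$, the threshold $4pm$ lies a fixed fraction below the mean, $\mu - 4pm \ge \tfrac{1}{5}\mu$, so $\{S \le 4pm\} \subseteq \{\,|S-\mu| \ge \tfrac{1}{5}\mu\,\}$ and Chebyshev gives
\[
\Pr(S \le 4pm) \;\le\; \frac{25\,\mathrm{Var}(S)}{\mu^2} \;\le\; \frac{25}{\mu} \;\le\; \frac{5}{pm}.
\]
Thus the failure probability decays like $O(1/(pm))$; feeding in the quantitative hypothesis on $m$ makes it as small as needed — $m \gtrsim 1/(p\delta)$ already forces it below $O(\delta)$, and taking $m$ a further factor $1/\delta$ larger pushes it to the stated $O(\delta^2)$ — and in any case well below $1/4$, which together with the first half lets the circuit distinguish the two distributions with probability $3/4$.

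I do not expect a genuine obstacle here. The two points that merit an explicit line rather than being absorbed into $O(\cdot)$ are: the bound $\mathrm{Var}(S) \le \E S$ for a sum of independent Bernoullis (which is what keeps Chebyshev effective at this scale), and the observation that the slack between the hypothesised mean factor $5p$ and the threshold factor $4p$ is exactly what makes the deviation $\mu - 4pm$ a constant proportion of $\mu$; without that slack the Chebyshev step would not close.
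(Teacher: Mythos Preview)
Your proposal is correct and follows exactly the approach the paper intends: the paper does not give a separate proof of this fact at all, merely naming Markov's and Chebyshev's inequalities inside the statement and leaving the routine computations to the reader, which is precisely what you carry out. You also correctly flag that with $m \gtrsim 1/(p\delta)$ Chebyshev only yields $O(\delta)$ rather than the stated $O(\delta^2)$; this is a harmless slip in the statement, since in the paper's application $\delta$ is taken to be a small constant and only a bound below $1/4$ is needed.
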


We will use $X_i$ to be a clique indicator on random $l$ vertices under the distribution $\mathcal{T}^-_{\alpha}$, and $X'_i$ to be a clique indicator on random $l$ vertices under the $\mathcal{T}^+_{\beta}$.

In what follows we will discuss how to choose $l$ in a way such that indeed $\E X_i' \geq 5 \E X_i$, and take $\delta$ to be a small constant to get a circuit distinguishing $\mathcal{T}^-_{\alpha}$ from $\mathcal{T}^+_\beta$. For the random variables $X_i$ (defined as the clique indicators on the negative distribution), we use only the Markov inequality, and we do not need to worry about correlations between them. 

Let us quickly discuss that indeed, on the positive distribution, the random variables $X_i'$ and $X_j'$ are indeed independent~--- that is the case, since conditioned on any specific realization $G$ of $\mathcal{T}^+_{\beta}$~--- i.e. $G$ being a clique on random $\beta$ vertices, the random variables $X_{i}'$ and $X_j'$ for $i\not=j$ are independent (since they are clique indicators on subsets of size $l$ chosen independently at random), and have $\E[X_i | G] = \E [X_i]$ --- the position of the original $\beta$ vertices in a clique does not affect the probability that a random $l$ vertices are the subset of those $\beta$ vertices.

The fact gives a bound on how many clique indicators we need for differentiating the graph distributions when the individual clique probabilities are sufficiently far from each other.

\subsubsection*{Circuit Analysis}

All we need to do now is to choose the parameter $l$ in such a way that $q \geq 5p$, where $p := \E_{G\sim \mathcal{T}^-{\alpha}} \mathcal{K}_A(G)$ and $q := \E_{G\sim \mathcal{T}^+{\beta}} \mathcal{K}_A(G)$. The resulting circuit size will be of order $O(1/q)$.

\begin{proof}[Proof of~\Cref{thm:main-ub}]

We consider circuit $C_n$ constructed as discussed above, with the threshold $\tau := 9 (\beta/n)^l$.  

According to~\Cref{lem:positive-and-negative-prob} and~\Cref{fact:distinguish} in order to make sure that the circuit indeed rejects $\mathcal{T}^-_{\alpha}$ and accepts $\mathcal{T}^+_{\beta}$ for given $\alpha \leq \beta$, we would like to pick the smallest $l$ such that
\begin{equation*}
    (\beta/n)^l \geq 5` (n^{-\frac{2}{\alpha-1}})^{l^2}
\end{equation*}
leading to a circuit of size $O((n^{\frac{2}{\alpha - 1}})^{l^2})$. After a sequence of simple algebraic manipulations, and taking
$$
\gamma := \frac{\alpha - 1}{2} \frac{\log(n/\beta)}{\log n}
$$
this condition is implied by
$$
l \geq \gamma + C/\gamma 
$$
for some universal constant $C$, since in this case we have $l^2 \geq \gamma^2 + 2C$. Taking $l := \gamma + C/\gamma$, and observing that by assumption we have
\begin{equation*}
    \frac{\log(n/\beta)}{\log n} \leq \delta
\end{equation*} 
we obtain the circuit size
\begin{equation*}
    \size(C_n) \leq O(n^{\frac{2l^2 }{\alpha - 1}}) \leq  O(n^{\frac{2\gamma^2}{\alpha-1}}) \leq O(n^{\alpha \delta^2/2}).
\end{equation*}
\end{proof}

\printbibliography

@article{talagrand93,
author = {Michel Talagrand},
title = {{Regularity of Infinitely Divisible Processes}},
volume = {21},
journal = {The Annals of Probability},
number = {1},
publisher = {Institute of Mathematical Statistics},
pages = {362 -- 432},
keywords = {Bernoulli process, bracketing, concentration of measure, Infinitely divisible, Levy measure, majorizing measure, Rosinski's representation, Sample boundedness},
year = {1993},
doi = {10.1214/aop/1176989409},
URL = {https://doi.org/10.1214/aop/1176989409}
}

@article{cavalar2022,
  title={Monotone circuit lower bounds from robust sunflowers},
  author={Cavalar, Bruno Pasqualotto and Kumar, Mrinal and Rossman, Benjamin},
  journal={Algorithmica},
  volume={84},
  number={12},
  pages={3655--3685},
  year={2022},
  publisher={Springer}
}

@article{alon87,
    author = {N. Alon and R. B. Boppana},
    title = {The monotone circuit complexity of Boolean functions},
    journal = {Combinatorica 7},
    year = {1987}
}

@article{alweiss20,
    author = {Alweiss, Ryan and Lovett, Shachar and Wu, Kewen and Zhang, Jiapeng},
    title = {Improved bounds for the sunflower lemma},
    journal = {Proceedings of the 52nd Annual ACM SIGACT Symposium on Theory of Computing},
    year = {2020}
}

@article{erdos60,
    author = {Erd\H{o}s, Paul and Rado, Richard},
    title = {Intersection theorems for systems of sets},
    journal = {Journal of the London Mathematical Society},
    year = {1960}
}

@inproceedings{razborov1985lower,
  title={Lower bounds on the monotone complexity of some Boolean function},
  author={Razborov, Alexander},
  booktitle={Soviet Math. Dokl.},
  volume={31},
  pages={354--357},
  year={1985}
}

@article{bell2021note,
title = {Note on sunflowers},
journal = {Discrete Mathematics},
volume = {344},
number = {7},
pages = {112367},
year = {2021},
issn = {0012-365X},
doi = {https://doi.org/10.1016/j.disc.2021.112367},
url = {https://www.sciencedirect.com/science/article/pii/S0012365X21000807},
author = {Tolson Bell and Suchakree Chueluecha and Lutz Warnke},
keywords = {Sunflower, Set system, Sunflower conjecture, Sunflower problem, Intersection theorem},
abstract = {A sunflower with p petals consists of p sets whose pairwise intersections are identical. The goal of the sunflower problem is to find the smallest r=r(p,k) such that any family of rk distinct k-element sets contains a sunflower with p petals. Building upon a breakthrough of Alweiss, Lovett, Wu and Zhang from 2019, Rao proved that r=O(plog(pk)) suffices; this bound was reproved by Tao in 2020. In this short note we record that r=O(plogk) suffices, by using a minor variant of the probabilistic part of these recent proofs.}
}

@article{Rao2020Coding,
	author = {Rao, Anup},
	journal = {Discrete Analysis},
	doi = {10.19086/da.11887},
	year = {2020},
	month = {2},
	title = {Coding for {Sunflowers}},
}

@misc{Tao2020sunflower,
    author = {Tao, Terrence},
    journal = {Blog post},
    url = {https://terrytao.wordpress.com/2020/07/20/the-sunflower-lemma-via-shannon-entropy/},
    year = {2020},
}

@misc{Hu2021sunflower,
    author = {Hu, Lunjia},
    journal = {Blog post},
    url = {https://theorydish.blog/2021/05/19/entropy-estimation-via-two-chains-streamlining-the-proof-of-the-sunflower-lemma/},
    year = {2021},
}

@article{rao2023sunflowers,
  title={Sunflowers: from soil to oil},
  author={Rao, Anup},
  journal={Bulletin of the American Mathematical Society},
  volume={60},
  number={1},
  pages={29--38},
  year={2023}
}

@INPROCEEDINGS{rossman2010monotone,
  author={Rossman, Benjamin},
  booktitle={2010 IEEE 51st Annual Symposium on Foundations of Computer Science}, 
  title={The Monotone Complexity of k-clique on Random Graphs}, 
  year={2010},
  volume={},
  number={},
  pages={193-201},
  keywords={Complexity theory;Approximation methods;Lattices;Logic gates;Digital TV;Approximation algorithms;Additives;clique;monotone circuits;average-case complexity;quasi-sunflowers},
  doi={10.1109/FOCS.2010.26}}

@inproceedings{harnik00lower,
    author = {Harnik, Danny and Raz, Ran},
    title = {Higher lower bounds on monotone size},
    year = {2000},
    isbn = {1581131844},
    publisher = {Association for Computing Machinery},
    address = {New York, NY, USA}, url = {https://doi.org/10.1145/335305.335349},
    doi = {10.1145/335305.335349},
    booktitle = {Proceedings of the Thirty-Second Annual ACM Symposium on Theory of Computing},
    pages = {378–387},
    numpages = {10},
    location = {Portland, Oregon, USA},
    series = {STOC '00} }

@book{jukna2012boolean,
  title={Boolean function complexity: advances and frontiers},
  author={Jukna, Stasys and others},
  volume={27},
  year={2012},
  publisher={Springer}
}

@article{andreev2008large,
title = {Very large cliques are easy to detect},
journal = {Discrete Mathematics},
volume = {308},
number = {16},
pages = {3717-3721},
year = {2008},
issn = {0012-365X},
doi = {https://doi.org/10.1016/j.disc.2007.07.036},
url = {https://www.sciencedirect.com/science/article/pii/S0012365X07005377},
author = {A.E. Andreev and S. Jukna},
keywords = {Clique function, Monotone circuits, Perfect hashing, Vertex cover, Critical graphs},
abstract = {It is known that, for every constant k⩾3, the presence of a k-clique (a complete sub-graph on k vertices) in an n-vertex graph cannot be detected by a monotone boolean circuit using much fewer than nk gates. We show that, for every constant k, the presence of an (n-k)-clique in an n-vertex graph can be detected by a monotone circuit using only a logarithmic number of fanin-2 OR gates; the total number of gates does not exceed O(n2logn). Moreover, if we allow unbounded fanin, then a logarithmic number of gates is enough.}
}

@inproceedings{yao1977probabilistic,
  title={Probabilistic computations: Toward a unified measure of complexity},
  author={Yao, Andrew Chi-Chin},
  booktitle={18th Annual Symposium on Foundations of Computer Science (sfcs 1977)},
  pages={222--227},
  year={1977},
  organization={IEEE Computer Society}
}

@inproceedings{andreev1985method,
  title={A method for obtaining lower bounds on the complexity of individual monotone functions},
  author={Andreev, Aleksandr Egorovich},
  booktitle={Doklady Akademii Nauk},
  volume={282},
  number={5},
  pages={1033--1037},
  year={1985},
  organization={Russian Academy of Sciences}
}

@inproceedings{garg2018monotone,
  title={Monotone circuit lower bounds from resolution},
  author={Garg, Ankit and G{\"o}{\"o}s, Mika and Kamath, Pritish and Sokolov, Dmitry},
  booktitle={Proceedings of the 50th Annual ACM SIGACT Symposium on Theory of Computing},
  pages={902--911},
  year={2018}
}

@inproceedings{goos2019adventures,
  title={Adventures in monotone complexity and TFNP},
  author={G{\"o}{\"o}s, Mika and Kamath, Pritish and Robere, Robert and Sokolov, Dmitry},
  booktitle={10th Innovations in Theoretical Computer Science Conference (ITCS 2019)},
  year={2019},
  organization={Schloss-Dagstuhl-Leibniz Zentrum f{\"u}r Informatik}
}

@inproceedings{cavalar2023, author = {Cavalar, Bruno P. and Oliveira, Igor C.}, title = {Constant-Depth Circuits vs. Monotone Circuits}, year = {2023}, isbn = {9783959772822}, publisher = {Schloss Dagstuhl--Leibniz-Zentrum fuer Informatik}, address = {Dagstuhl, DEU}, url = {https://doi.org/10.4230/LIPIcs.CCC.2023.29}, doi = {10.4230/LIPIcs.CCC.2023.29}, booktitle = {Proceedings of the Conference on Proceedings of the 38th Computational Complexity Conference}, articleno = {29}, numpages = {37}, keywords = {circuit complexity, monotone circuit complexity, bounded-depth circuis, constraint-satisfaction problems}, location = {Warwick, United Kingdom}, series = {CCC '23} }

@article{jukna1999combinatorics,
  title={Combinatorics of monotone computations},
  author={Jukna, Stasys},
  journal={Combinatorica},
  volume={19},
  number={1},
  pages={65--85},
  year={1999},
  publisher={Springer}
}

@article{andreev1987method,
  title={A method for obtaining efficient lower bounds for monotone complexity},
  author={Andreev, Alexander E},
  journal={Algebra and Logic},
  volume={26},
  number={1},
  pages={1--18},
  year={1987},
  publisher={Springer}
}

@book{talagrand2014upper,
  title={Upper and lower bounds for stochastic processes},
  author={Talagrand, Michel},
  volume={60},
  year={2014},
  publisher={Springer}
}
\end{document}